\newtheorem{theorem}{Theorem}
\newtheorem{corollary}{Corollary}
\newtheorem{proposition}{Proposition}
\newtheorem{remark}{Remark}
\newenvironment{proof}[1][Proof]{\noindent \textbf{#1.} }{\  \rule{0.5em}{0.5em}}
\begin{document}

\title{The fundamental properties of time varying AR models with non
stochastic coefficients}
\author{M. Karanasos$^{\dagger }$, A. G. Paraskevopoulos and S. Dafnos \\
%EndAName
\textit{Brunel University, London, UK}}
\date{This draft: March 8th 2014\\
}
\maketitle

\begin{abstract}
The paper examines the problem of representing the dynamics of low order
autoregressive (AR) models with time varying (TV) coefficients. \ The
existing literature computes the forecasts of the series from a recursion
relation. \ Instead, we provide the linearly independent solutions to TV-AR
models. \ Our solution formulas enable us\ to derive the fundamental
properties of these processes, and obtain explicit expressions for the
optimal predictors. \ We illustrate our methodology and results with a few
classic examples amenable to time varying treatment, e.g, periodic,
cyclical, and AR models subject to multiple structural breaks.

\textbf{Keywords}: abrupt breaks, covariance structure, cyclical processes,
homogeneous and particular solutions, optimal predictors, periodic AR models.

\vspace{1.5in}

{\footnotesize We gratefully} {\footnotesize acknowledge the helpful
conversations we had with L. Giraitis, G. Kapetanios and A. Magdalinos in
the preparation of the paper. We would also like to thank R. Baillie, L.
Bauwens, M. Brennan, D. van Dijk, W. Distaso, C. Francq, C. Gourieroux, E.
Guerre, M. Guidolin, A. Harvey, C. Hommes, S. Leybourne, P. Minford, A.
Monfort, C. Robotti, W. Semmler, R. Smith, T. Ter\"{a}svirta, P. Zaffaroni,
and J-M Zakoian for suggestions and comments on a closely related work (see
Paraskevopoulos et al., 2013) which greatly improved many aspects of the
current paper as well. We are grateful to seminar participants at CREST,
Erasmus University, London School of Economics, Queen Mary University of
London, Imperial College, University of Essex, Birkbeck College University
of London, University of Nottingham, Cardiff University, University of
Manchester, Athens University of Economics and Business, and University of
Piraeus. We have also benefited from the comments given by participants (on
the closely related work) at the 3rd Humboldt-Copenhagen Conference on
Financial Econometrics (Humboldt University, Berlin, March 2013), the SNDE
21st Annual Symposium (University of Milan-Bicocca, March 2013), the 8th and
9th BMRC-QASS conferences on Macro and Financial Economics (Brunel
University, London, May 2013), the 7th CFE Conference (Senate House,
University of London, December 2013), and the 1st RASTANEWS Conference
(University of Milan-Bicocca, January 2014).}

$^{\dagger }${\footnotesize Address for correspondence: Menelaos Karanasos,
Economics and Finance, Brunel University, West London, UB3 3PH, UK; email:
menelaos.karanasos@brunel.ac.uk, tel: +44(0)1895265284, fax: +44
(0)1895269770.}
\end{abstract}

\newpage

\doublespacing

\section{Introduction}

The constancy of the parameters assumption made in the specification of time
series econometric models has been the subject of criticism for a long time.
\ It is argued that the assumption is inappropriate in the face of changing
institutions and a dynamically responding economic policy. These evolving
factors cause the parameter values characterizing economic relationships to
change over time. \ Partly to respond to the criticism and partly motivated
by the desire to construct dynamic models, econometricians have developed an
arsenal of powerful methods that attempt to capture the evolving nature of
our economy. \ Such frameworks include AR processes which contain multiple
abrupt breaks, and periodic and cyclical autoregressive models.

A methodology is presented in this paper for analyzing time varying systems
which is also applicable to the three aforementioned processes. A technique
is set forth for examining the periodic AR model, which overcomes the usual
requirement of expressing the periodic process in a vector AR (VAR) form.

The first attempts to develop theories for time varying models, made in the
1960's, were based on a recursive approach (Whittle, 1965) and on
evolutionary spectral representations (Abdrabbo and Priestley, 1967). \ Rao
(1970) used the method of weighted least squares to estimate an
autoregressive model with time dependent coefficients. Despite nearly half a
century of research work, the great advances, and the widely recognized
importance of time varying structures, the bulk of econometric models have
constant coefficients. There is a lack of a general theory that can be
employed to systematically explore their time series properties. Granger in
some of his last contributions highlighted the importance of the topic (see,
Granger 2007, and 2008).

There is a general agreement that the main obstacle to progress is the lack
of a universally applicable method yielding a closed form solution to
stochastic time varying difference equations. \ The present paper is part of
a research program aiming to produce and utilize closed form solutions to
AR\ processes with non stochastic time dependent coefficients. \ Our
methodology attempts to trace the path of these changing coefficients. To be
specific, in the time series literature, there is no method for finding the $%
p$ linearly independent solutions that we need to obtain the general
solution of the TV-AR model of order $p$. \ To keep the exposition tractable
and reveal its practical significance we work with low order specifications.

The main part of the paper begins with subsection 2.2, where we state the
second order difference equation with time variable coefficients, which is
our main object of inquiry. \ We start by writing this equation in a more
efficient way as an infinite linear system. \ The next step is to define the
matrix of coefficients, called the fundamental solution matrix, associated
with the system representation. \ This matrix is a workhorse of our research
and it is derived step by step from the time varying coefficients of the
difference equation.

The reader will have noticed that we have moved the goalposts, paradoxically
against us, from obtaining a solution for a time varying (low order)
difference equation, to solving an infinite linear system. \ The reason is
that the solution of such infinite systems has been made possible recently,
due to an extension of the standard Gaussian elimination, called the
infinite Gaussian elimination (see Paraskevopoulos, 2012). \ Applying this
infinite extension algorithm, we obtain the fundamental solutions, which
take explicit forms in terms of the determinants of the fundamental solution
matrix.

Subsection 2.3 contains the main theoretical result of the paper. \ Pursuing
the conventional route followed by the differential and difference equations
literature, we construct the general solution by finding its two parts, the
homogeneous one and a particular part. \ It is expressed as Theorem 1 and
its proof is in Appendix A. The coefficients in these solutions are
expressed as determinants of tridiagonal matrices. The second order
properties of the TV-AR process can easily be deduced from these solutions.
An additional benefit of these solutions is the facility with which linear
prediction can be produced. This allows us to provide a thorough description
of time varying models by deriving: first, multistep ahead forecasts, the
associated forecast error and the mean square error; second, the first two
unconditional moments of the process and its covariance structure. In
related works we provide results for the $p$ order and the more general
ascending order (see, for example, Paraskevopoulos et al., 2013). Our method
is a natural extension of the first order solution formula. It also includes
the linear difference equation with constant coefficients (see, for example,
Karanasos, 2001) as a special case.

The next two Sections of the paper, 3 and 4, apply our theoretical framework
to a few classic time series models, which are obvious candidates for a time
varying treatment. \ Linear systems with time dependent coefficients are not
only of interest in their own right, but, because of their connection with
periodic models and time series data which are subject to structural breaks.
They also provide insight into these processes as well. Viewing a periodic
AR (PAR) formulation as a TV model clearly obviates the need for VAR
analysis. For surveys and a review of some important aspects of PAR
processes see Franses (1996b), Franses and Paap (2004), Ghysels and Osborn
(2001), and Hurd and Miamee (2007). The authoritative studies by Osborn
(1988), Birchenhall et al. (1989), and Osborn and Smith (1989) applied these
models to consumption. del Barrio Castro and Osborn (2008) pointed out that
\textquotedblleft \textit{despite the attraction of PAR models from the
perspective of economic decision making in a seasonal context, the more
prominent approach of empirical workers is to assume that the AR
coefficients, except for the intercept, are constant over the seasons of the
year}\textquotedblright .\footnote{%
del Barrio Castro and Osborn (2008, 2012) (see the references therein for
this stream of important research; see also Taylor, 2002, 2003 and 2005)
test for seasonal unit roots in integrated PAR models.}

Despite the recognized importance of periodic processes for economics there
have been few attempts to investigate their time series properties (see,
among others, Franses, 1994, Franses, 1996a, Lund and Basawa, 2000, Franses
and Paap, 2005). Tiao and Grupe (1980) and Osborn (1991) analyzed these
models by converting them into a VAR process with constant coefficients. In
this paper we develop a general theory that can be employed to
systematically explore the fundamental properties of the periodic\
formulation. We remain within the univariate framework and we look upon the
PAR model as a stochastic difference equation with time varying (albeit
periodically varying) parameters. \ 

Although some theoretical analysis of periodic specifications was carried
out by the aforementioned studies the investigation of their fundamental
properties appears to have been limited to date. Cipra and Tlust\'{y}
(1987), Anderson and Vecchia (1993), Adams and Goodwin (1995), Shao (2008),
and Tesfaye et al. (2011) discuss parameter estimation and asymptotic
properties of periodic AR moving average (PARMA) specifications. Bentarzi
and Hallin (1994) and McLeod (1994) derive invertibility conditions and
diagnostic checks for such processes. Lund and Basawa (2000) develop a
recursive scheme for computing one-step ahead predictors for PARMA
specifications, and compute multi-step-ahead predictors recursively from the
one-step-ahead predictions. Anderson et al. (2013) develop a recursive
forecasting algorithm for periodic models. We derive explicit formulas that
allow the analytic calculation of the multi-step-ahead predictors.

We begin Section 3 with a PAR($2$) model. We limit our analysis to a low
order to save space and also since Franses (1996a) has documented that low
order PAR specifications often emerge in practice. \ First, we formulate it
as a TV model; then, we express its fundamental solution matrix as a block
Toeplitz matrix. \ This representation enables us to establish an explicit
formula for the general solution in terms of the determinant of such a block
matrix. The result is presented in Proposition \ref{PropGenSolPer}, which is
the equivalent to Theorem 1 with the incorporation of the seasonal effects.
\ That is, by taking account of seasons and periodicities, we obtain the
general solution, by constructing its homogeneous and particular parts and
then adding them up. \ In subsection 3.1, we turn our attention to a
different type of seasonality, namely the cyclical AR (CAR) model and we
provide its solution.

Section 4 is an application of the time varying framework to time series
subject to multiple structural breaks. \ We employ a technique analogous to
the one used in Section 3 on the PAR formulation. \ In particular, we
express the fundamental solutions of the AR($2$) model with $r$ abrupt
breaks, as determinants of block tridiagonal matrices. \ Again, we are able
to obtain the general solution by finding and adding the homogeneous and
particular solutions.

One of the advantages of our time varying framework is that we can trace the
entire path of the series under consideration. \ In Section 5, we employ
this information feature to derive the fundamental properties of the various
TV-AR processes. \ For example, simplified closed-form expressions of the
multi-step forecast error variances are derived for time series when low
order PAR models adequately describe the data. These formulae allow a fast
computation of the multi-step-ahead predictors. Finally, Section 6 concludes.

\section{Time Varying AR Models\label{SecTVAR}}

\subsection{Preliminaries and Purpose of Analysis}

\subsubsection{Notation}

Throughout the paper we adhere to the following conventions: ($\mathbb{Z}%
^{+} $) $\mathbb{Z},$ and ($\mathbb{R}^{+}$) $\mathbb{R}$ stand for the sets
of (positive) integers, and (positive) real numbers, respectively. Matrices
and vectors are denoted by upper and lower case boldface symbols,
respectively. For square matrices $\mathbf{X}=[x_{ij}]_{i,j=1,\ldots ,k}\in 
\mathbb{R}^{kxk}$ using standard notation, det$(\mathbf{X})$ or $\left \vert 
\mathbf{X}\right \vert $ denotes the determinant of matrix $\mathbf{X}$ and $%
adj(\mathbf{X})$ its adjoint matrix. To simplify our exposition we also
introduce the following notation: $t\in $ $\mathbb{Z}$, ($n,l$) $\in $ $%
\mathbb{Z}^{+}\times \mathbb{Z}^{+}$; $T=0,\ldots ,n$ denotes the `periods'
(i.e., years); $s=1,\ldots ,l$, denotes the `seasons' (i.e, quarters in a
year: $l=4$). The $t$ represents the present time and $k\in \mathbb{Z}^{+}$
the number of seasons such that at time $\tau _{k}=t-k$ information is given.

Let the triple $(\Omega ,\{ \tciFourier _{t},t\in 
%TCIMACRO{\U{2124} }%
%BeginExpansion
\mathbb{Z}
%EndExpansion
\},P)$\ denote a complete probability space with a filtration, $\{
\tciFourier _{t}\}$,\ which is a non-decreasing sequence of $\sigma $-fields 
$\tciFourier _{t-1}\subseteq \tciFourier _{t}\subseteq \tciFourier $, $t\in 
%TCIMACRO{\U{2124} }%
%BeginExpansion
\mathbb{Z}
%EndExpansion
$. The space of $P$-equivalence classes of finite complex random variables
with finite $p$-order is indicated by $L_{p}$. Finally, $H=L_{2}(\Omega
,\tciFourier _{t},P)$\ stands for a Hilbert space of random variables with
finite first and second moments. 
\begin{comment}

We also discuss some further notational issues for PAR processes.

\end{comment}

\subsubsection{The Problem}

The solution of the second order linear difference equation with non
constant coefficients is the building block for the extension of the well
known closed form solution of the first order to the $p$th order time
varying equation. As noted by Sydsaeter et al. (2008), in their classic text
(Further Mathematics for Economic Analysis, p. 403), in the case of second
order homogeneous linear difference equations with variable coefficients:

"\textit{There is no universally applicable method of discovering the two
linearly independent solutions that we need in order to find the general
solution of the equation.}"

We can identify two lines of inquiry that can be pursued to solve linear
difference equations with time varying coefficients. \ Searching for a
solution, one can follow either of the following two paths. The first is to
develop an analogous method to the standard one that exists for the linear $%
p $ order difference equation with constant coefficients: find the
eigenvalues, solve the characteristic equation, and obtain the closed form.
The second line of research searches for the generalization of the closed
form formula that exists for first order time varying difference equations.
\ Here, the way to proceed is to make up a conjecture and try to prove it by
induction. The two strands of the literature have taken important steps, but
have not provided us with a general solution method that we can apply; the
existing results lack generality and applicability. \ To be more specific,
the research problem we face is that there is a lack of a universally
applicable method yielding a closed form solution to stochastic higher order
difference equations with time dependent coefficients.

A general method for solving infinite linear systems with row-finite
coefficient matrices has recently been established by Paraskevopoulos
(2012). It is a modified version of the standard Gauss-Jordan elimination
method implemented under a right pivot strategy, called infinite
Gauss-Jordan elimination. Expressing the linear difference equation of
second order with time dependent coefficients as an infinite linear system,
the Gaussian elimination part of the method is directly applicable. It
generates two linearly independent homogeneous solution sequences. The
general term of each solution sequence turns out to be a continuant
determinant. The general solutions of the homogeneous and nonhomogeneous
difference equation are expressible as a single Hessenbergian, that is, a
determinant of a lower Hessenberg matrix (see Karanasos, Paraskevopoulos and
Dafnos 2013). Theorem $3$ in Paraskevopoulos et al. (2013) affords an easy
means of finding, for a given lower Hessenberg matrix, its ordinary
expansion in non-determinant form (see also Paraskevopoulos and Karanasos,
2013). These results are extendible to the solution of the $p$th and
ascending order time varying linear difference equations in terms of a
single Hessenbergian (see Paraskevopoulos et al., 2013). This makes it
possible to introduce, in the above cited reference, a unified theory for
time varying models.

\begin{comment}  

When the coefficients are deterministic time varyiing the process is for sure nonstationary. 
When the coefficients are stochastic time varying then the process can be stationary.  
\end{comment}

\subsection{Fundamental Solution Matrices\label{SubSecFundSol}}

The main theoretical contribution of this Section is the development of a
method that provides the closed form of the general solution to a TV-AR($2$)
model.

Next we give the main definition that we will use in the rest of the paper.
Consider a second order stochastic difference equation with time dependent
coefficients, which is equivalent to the time varying AR($2$) process, given
by 
\begin{equation}
y_{t}=\phi _{0}(t)+\phi _{1}(t)y_{t-1}+\phi _{2}(t)y_{t-2}+\varepsilon _{t}%
\text{, }  \label{TVAR(P)}
\end{equation}%
where $\{ \varepsilon _{t},t\in 
%TCIMACRO{\U{2124} }%
%BeginExpansion
\mathbb{Z}
%EndExpansion
\}$ is a sequence of zero mean serially uncorrelated random variables
defined on $L_{2}(\Omega ,\tciFourier _{t},P)$ with $\mathbb{E}[\varepsilon
_{t}\left \vert \tciFourier _{t-1}\right. ]=0$ a.s., and finite variance: $%
0<M_{l}<\sigma _{t}^{2}<M<\infty $, $\forall $ $t$, ($M_{l},M$) $\in \mathbb{%
R}^{+}\times \mathbb{R}^{+}$.

\begin{remark}
\label{RemHeter}We have relaxed the assumption of homoscedasticity (see
also, among others, Paraskevopoulos et al., 2013 and Karanasos et al.,
2013), which is likely to be violated in practice and allow $\varepsilon
_{t} $ to follow, for example, a periodical GARCH type of process (see,
Bollerslev and Ghysels, 1996).
\end{remark}

The fundamental solution sequence, and in general all the solution
sequences, must necessarily be functions of the independent variable $t$, so
as to satisfy eq. (\ref{TVAR(P)}). Our intermediate objective is to obtain
the fundamental solution matrix, denoted below by $\mathbf{\Phi }_{t,k}$,
which is associated with our stochastic difference equation (\ref{TVAR(P)});
the $\mathbf{\Phi }_{t,k}$ matrix will be derived from the time varying
coefficients of eq. (\ref{TVAR(P)}). \ The best way to appreciate the
representation of the fundamental solution matrix is to view the stochastic
difference equation as a linear system. \ We carry out this construction
below. \ Once we have this stepping stone in place, then we can pursue our
ultimate objective, by computing the determinants of the $\mathbf{\Phi }%
_{t,k}$, which will give us the linearly independent solutions sequences to
the difference equation.

Equation (\ref{TVAR(P)}) written as 
\begin{equation}
\phi _{2}(t)y_{t-2}+\phi _{1}(t)y_{t-1}-y_{t}=-[\phi _{0}(t)+\varepsilon
_{t}],  \label{Difference(P)}
\end{equation}%
takes the infinite row (and column)-finite system form 
\begin{equation}
\mathbf{\Phi \cdot y}=-\mathbf{\phi }-\mathbf{\varepsilon ,}
\label{PHIMatrix}
\end{equation}%
where

\begin{equation*}
\mathbf{\Phi =}\left( 
\begin{array}{ccccccc}
\phi _{2}(\tau _{k}+1) & \phi _{1}(\tau _{k}+1) & -1 & 0 & 0 & 0 & ... \\ 
0 & \phi _{2}(\tau _{k}+2) & \phi _{1}(\tau _{k}+2) & -1 & 0 & 0 & ... \\ 
0 & 0 & \phi _{2}(\tau _{k}+3) & \phi _{1}(\tau _{k}+3) & -1 & 0 & ... \\ 
\vdots & \vdots & \vdots & \vdots & \vdots & \vdots & \vdots \vdots \vdots%
\end{array}%
\right) ,
\end{equation*}%
(row-finite is an infinite matrix whose rows have finite non zero elements)
and%
\begin{equation*}
\mathbf{y=}\left( 
\begin{array}{l}
y_{\tau _{k}-1} \\ 
y_{\tau _{k}} \\ 
y_{\tau _{k}+1} \\ 
y_{\tau _{n}+2} \\ 
y_{\tau _{k}+3} \\ 
y_{\tau _{k}+4} \\ 
\multicolumn{1}{c}{\vdots}%
\end{array}%
\right) \text{, }\mathbf{\phi }\mathbb{=}\left( 
\begin{array}{l}
\phi _{0}(\tau _{k}+1) \\ 
\phi _{0}(\tau _{k}+2) \\ 
\phi _{0}(\tau _{k}+3) \\ 
\multicolumn{1}{c}{\vdots}%
\end{array}%
\right) \text{, }\mathbf{\varepsilon =}\left( 
\begin{array}{l}
\varepsilon _{_{\tau _{k}+1}} \\ 
\varepsilon _{_{\tau _{k}+2}} \\ 
\varepsilon _{_{\tau _{k}+3}} \\ 
\multicolumn{1}{c}{\vdots}%
\end{array}%
\right)
\end{equation*}%
(recall that $\tau _{k}=t-k$). The system representation results from the
values that the coefficients take in successive time periods. The
equivalence of (\ref{Difference(P)}) and (\ref{PHIMatrix}) follows from the
fact that the $i$th equation in (\ref{PHIMatrix}), as a result of the
multiplication of the $i$th row of $\mathbf{\Phi }$ by the column of $y$%
{\footnotesize s} equated to $-[\phi _{0}(\tau _{k}+i)+\varepsilon _{_{\tau
_{k}+i}}]$, is equivalent to eq. (\ref{Difference(P)}), as of time $\tau
_{k}+i$. The $\mathbf{\Phi }$ matrix in eq. (\ref{PHIMatrix}) can be
partitioned as 
\begin{equation*}
\mathbf{\Phi =}\left( 
\begin{tabular}{l|l}
$\mathbf{P}$ & $\mathbf{C}$%
\end{tabular}%
\right) ,
\end{equation*}%
where%
\begin{equation*}
\mathbf{P=}\left( 
\begin{array}{cc}
\phi _{2}(\tau _{k}+1) & \phi _{1}(\tau _{k}+1) \\ 
0 & \phi _{2}(\tau _{k}+2) \\ 
0 & 0 \\ 
\vdots & \vdots%
\end{array}%
\right) ,\text{ }\mathbf{C=}\left( 
\begin{array}{ccccc}
-1 & 0 & 0 & 0 & ... \\ 
\phi _{1}(\tau _{k}+2) & -1 & 0 & 0 & ... \\ 
\phi _{2}(\tau _{k}+3) & \phi _{1}(\tau _{k}+3) & -1 & 0 & ... \\ 
\vdots & \vdots & \vdots & \vdots & \vdots \vdots \vdots%
\end{array}%
\right) .
\end{equation*}%
That is, $\mathbf{P}$ consists of the first $2$ columns of $\mathbf{\Phi }$
and the $j$th column of $\mathbf{C}$, $j=1,2,\ldots ,$ is the ($2+j$)th
column of $\mathbf{\Phi }$. We will denote the $2$nd column of the $k\times
2 $ top submatrix of the matrix $\mathbf{P}$ by $\mathbf{\phi }_{t,k}$:%
\begin{equation*}
(\mathbf{\phi }_{t,k})^{\prime }=\left( 
\begin{array}{lllll}
{\small \phi }_{1}{\small (\tau }_{k}{\small +1),} & {\small \phi }_{2}%
{\small (\tau }_{k}{\small +2),} & 0, & \ldots & ,0%
\end{array}%
\right) .
\end{equation*}

The $k\times (k-1)$ top submatrix of matrix $\mathbf{C}$ is called the core
solution matrix and is denoted as 
\begin{equation}
\mathbf{C}_{t,k}=\left( 
\begin{array}{ccccc}
-1 &  &  &  &  \\ 
\phi _{1}(\tau _{k}+2) & -1 &  &  &  \\ 
\phi _{2}(\tau _{k}+3) & \phi _{1}(\tau _{k}+3) & -1 &  &  \\ 
& \ddots & \ddots & \ddots &  \\ 
&  & \phi _{2}(t-1) & \phi _{1}(t-1) & -1 \\ 
&  &  & \phi _{2}(t) & \phi _{1}(t)%
\end{array}%
\right)  \label{CoreMatrix}
\end{equation}%
(here and in what follows empty spaces in a matrix have to be replaced by
zeros). The fundamental solution matrix is obtained from the core solution
matrix $\mathbf{C}_{t,k}$, augmented on the left by the $\mathbf{\phi }%
_{t,k} $ column. That is,

\begin{eqnarray}
\mathbf{\Phi }_{t,k} &=&\left( 
\begin{tabular}{l|l}
$\mathbf{\phi }_{t,k}$ & $\mathbf{C}_{t,k}$%
\end{tabular}%
\right) =  \notag \\
&&\left( 
\begin{array}{cccccc}
\phi _{1}(\tau _{k}+1) & -1 &  &  &  &  \\ 
\phi _{2}(\tau _{k}+2) & \phi _{1}(\tau _{k}+2) & -1 &  &  &  \\ 
& \phi _{2}(\tau _{k}+3) & \phi _{1}(\tau _{k}+3) & -1 &  &  \\ 
&  & \ddots & \ddots & \ddots &  \\ 
&  &  & \phi _{2}(t-1) & \phi _{1}(t-1) & -1 \\ 
&  &  &  & \phi _{2}(t) & \phi _{1}(t)%
\end{array}%
\right) ,  \label{FAIMAR(p)X}
\end{eqnarray}%
(recall that $\tau _{k}=t-k$). Formally $\mathbf{\Phi }_{t,k}$ is a square $%
k\times k$ matrix whose ($i,j$) entry $1\leq i,j\leq k$ is given by

\begin{equation*}
\left \{ 
\begin{array}{cccccc}
-1 & \text{if} &  & i=j-1, & \text{and} & 2\leq j\leq k, \\ 
\phi _{1+m}(t-k+i) & \text{if} & m=0,1, & i=j+m, & \text{and} & 1\leq j\leq
k-m, \\ 
0 &  & \text{otherwise.} &  &  & 
\end{array}%
\right. 
\end{equation*}

It is a continuant or tridiagonal matrix, that is a matrix that is both an
upper and lower Hessenberg matrix. We may also characterize it as a `time
varying' Toeplitz matrix, because its time invariant version is a Toeplitz
matrix of bandwidth $3$. We next define the bivariate function $\xi :\mathbb{%
Z}\times \mathbb{Z}^{+}\longmapsto \mathbb{R}$ by 
\begin{equation}
\xi _{t,k}=\text{det}(\mathbf{\Phi }_{t,k})  \label{KSIAR(2)}
\end{equation}%
coupled with the initial values $\xi _{t,0}=1$, and $\xi _{t,-1}=0$. That
is, $\xi _{t,k}$ for $k\geq 2$, is a determinant of a $k\times k$ matrix;
each of the two nonzero diagonals (below the superdiagonal) of this matrix
consists of the time varying coefficients $\phi _{m}(\cdot )$, $m=1,2$, from 
$t-k+m$ to $t$. In other words, $\xi _{t,k}$ is a\ $k$th-order tridiagonal
determinant. Paraskevopoulos and Karanasos (2013) give its ordinary
expansion in non-determinant form (a closed form solution).

\subsection{Main Theorem\label{SubSecMainTheorem}}

This short section contains the statement of our main theorem.

\begin{theorem}
\label{TheoGenSol}The general solution of eq. (\ref{TVAR(P)}) with free
constants (initial condition values) $y_{t-k}$, $y_{t-k-1}$ is given by 
\begin{equation}
y_{t,k}^{gen}=y_{t,k}^{hom}+y_{t,k}^{par},  \label{TVAR(p)SOL}
\end{equation}%
where 
\begin{eqnarray*}
y_{t,k}^{hom} &=&\xi _{t,k}y_{t-k}+\phi _{2}(t-k+1)\xi _{t,k-1}y_{t-k-1}, \\
y_{t,k}^{par} &=&\sum_{i=0}^{k-1}\xi _{t,i}\phi
_{0}(t-i)+\sum_{i=0}^{k-1}\xi _{t,i}\varepsilon _{t-i}.
\end{eqnarray*}
\end{theorem}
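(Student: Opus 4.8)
The plan is to argue by induction on the lag depth $k$, the engine of the whole argument being a three-term recurrence satisfied by the tridiagonal determinants $\xi_{t,k}$ defined in (\ref{KSIAR(2)}). Everything reduces to checking that the closed-form candidate (\ref{TVAR(p)SOL}) is compatible with one further application of the defining difference equation (\ref{TVAR(P)}): the determinantal recurrence is exactly what converts the $k$-step formula into the $(k+1)$-step formula.

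First I would establish the recurrence
\[
\xi _{t,k+1}=\phi _{1}(t-k)\,\xi _{t,k}+\phi _{2}(t-k+1)\,\xi _{t,k-1},
\]
valid for $k\geq 0$ together with the conventions $\xi _{t,0}=1$ and $\xi _{t,-1}=0$. This follows by a Laplace expansion of $\det (\mathbf{\Phi }_{t,k+1})$ along its first column, whose only nonzero entries are $\phi _{1}(\tau _{k+1}+1)$ and $\phi _{2}(\tau _{k+1}+2)$. The first cofactor is again a fundamental solution matrix one size smaller, giving $\xi _{t,k}$; the second cofactor, after a further one-row expansion that strips off the leading $-1$, collapses to $\xi _{t,k-1}$, and the two sign factors combine to a plus. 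The one point requiring care is the index bookkeeping: because $\mathbf{\Phi }_{t,k}$ grows at its top-left corner while its bottom-right block stays anchored at time $t$, the arguments of $\phi _{1},\phi _{2}$ in the recurrence shift with $k$ through $\tau _{k}=t-k$, and one must verify that the boundary cases reproduce $\xi _{t,1}=\phi _{1}(t)$ consistently with $\xi _{t,-1}=0$.

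With the recurrence in hand the induction is routine. For the base case $k=1$ the matrix $\mathbf{\Phi }_{t,1}=(\phi _{1}(t))$ gives $\xi _{t,1}=\phi _{1}(t)$, so (\ref{TVAR(p)SOL}) collapses to $y_{t}=\phi _{1}(t)y_{t-1}+\phi _{2}(t)y_{t-2}+\phi _{0}(t)+\varepsilon _{t}$, which is precisely (\ref{TVAR(P)}). For the inductive step I would assume $y_{t}=y_{t,k}^{gen}$ and substitute the difference equation at time $t-k$, namely $y_{t-k}=\phi _{0}(t-k)+\phi _{1}(t-k)y_{t-k-1}+\phi _{2}(t-k)y_{t-k-2}+\varepsilon _{t-k}$, into the homogeneous part $\xi _{t,k}y_{t-k}+\phi _{2}(t-k+1)\xi _{t,k-1}y_{t-k-1}$. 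Collecting the coefficient of $y_{t-k-1}$ produces $\phi _{1}(t-k)\xi _{t,k}+\phi _{2}(t-k+1)\xi _{t,k-1}$, which the recurrence identifies as $\xi _{t,k+1}$; the coefficient of $y_{t-k-2}$ is $\phi _{2}(t-k)\xi _{t,k}$, exactly the value prescribed by $y_{t,k+1}^{hom}$; and the leftover terms $\xi _{t,k}\phi _{0}(t-k)+\xi _{t,k}\varepsilon _{t-k}$ are precisely the $i=k$ summands that upgrade $y_{t,k}^{par}$ to $y_{t,k+1}^{par}$. Hence $y_{t}=y_{t,k+1}^{gen}$, closing the induction.

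I expect the main obstacle to be purely the index-and-sign bookkeeping in the determinantal recurrence rather than anything structural: one must track the shifting arguments $t-k$ and $t-k+1$ through the cofactor expansion and confirm that both homogeneous coefficients and the particular-sum increment align with the definitions. An alternative route would bypass the induction by solving the finite truncation of the linear system (\ref{PHIMatrix}) via Cramer's rule, reading the coefficient of each initial value and of each forcing term $\phi _{0}(t-i)+\varepsilon _{t-i}$ directly as a ratio of continuant determinants; this is conceptually cleaner but demands extra care in justifying that the truncation captures the full dependence, so I would keep the inductive argument as the primary proof.
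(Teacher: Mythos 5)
Your proposal is correct, and it verifies cleanly: the first-column Laplace expansion does give $\xi_{t,k+1}=\phi_{1}(t-k)\xi_{t,k}+\phi_{2}(t-k+1)\xi_{t,k-1}$ (the sign $(-1)^{2+1}$ from the $(2,1)$ entry cancels against the $-1$ stripped off in the inner expansion), the base case $k=1$ reproduces eq.~(\ref{TVAR(P)}), and in the inductive step the coefficient of $y_{t-k-1}$ is exactly the recurrence, the coefficient of $y_{t-k-2}$ is $\phi_{2}(t-k)\xi_{t,k}$ as required by $y_{t,k+1}^{hom}$, and the leftover $\xi_{t,k}[\phi_{0}(t-k)+\varepsilon_{t-k}]$ is the $i=k$ increment of the particular sum. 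However, your route is genuinely different from the paper's. The paper does not iterate the difference equation at all: it runs the infinite Gauss--Jordan elimination on the coefficient matrix $\mathbf{\Phi}$ of the system (\ref{PHIMatrix}), reads off two fundamental homogeneous solution sequences $\mathbf{\xi}_{\tau_k}^{(1)},\mathbf{\xi}_{\tau_k}^{(2)}$ from the columns of the Hermite form, proves their determinant representations by an induction based on the \emph{last-row} expansion $\xi_{t,k}=\phi_{1}(t)\xi_{t-1,k-1}+\phi_{2}(t)\xi_{t-2,k-2}$ (a recurrence in $t$ with the lag shrinking, rather than your recurrence in $k$ with $t$ fixed), checks linear independence via the Casoratian, obtains $\xi_{t,k}^{(2)}=\phi_{2}(\tau_k+1)\xi_{t,k-1}$ by a first-column expansion, and represents the particular solution as the single augmented determinant (A.7) which it then expands along its first column. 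What the paper's approach buys is structural information --- two explicitly linearly independent homogeneous solutions spanning the solution space, and the Hessenbergian form of the particular solution that underpins the rest of the paper's machinery --- whereas your back-substitution induction is more elementary and self-contained, proving the identity directly for any sequence satisfying the recursion without invoking the infinite elimination apparatus. Your closing remark about a Cramer's-rule alternative is in fact close in spirit to what the paper actually does, so the truncation concern you flag is precisely the issue the infinite Gauss--Jordan elimination of Paraskevopoulos (2012) is designed to handle.
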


In the above Theorem $y_{t,k}^{gen}$\ is decomposed into two parts: first,
the $y_{t,k}^{hom}$\ part, which is written in terms of the two\ free
constants ($y_{t-k-m}$, $m=0,1$), and, second, the $y_{t,k}^{par}$\ part,
which contains the\ time varying drift terms and the error terms from time $%
t-k+1$\ to time $t$.

Notice that the `coefficients' of eq. (\ref{TVAR(p)SOL}), that is, the $\xi $%
's are expressed as continuant determinants. Moreover, for `$k=0$' (for $i>j$
we use the convention $\sum_{q=i}^{j}(\cdot )=0$), since $\xi _{t,0}=1$ and $%
\xi _{t,-1}=0$ (see eq. (\ref{KSIAR(2)})), eq. (\ref{TVAR(p)SOL}) becomes an
`identity': $y_{t,0}^{gen}=y_{t}$. Similarly, when `$k=1$' eq. (\ref%
{TVAR(p)SOL}), since $\xi _{t,1}=\phi _{1}(t)$ and $\xi _{t,0}=1$, reduces
to $y_{t,1}^{gen}=\phi _{1}(t)y_{t-1}+\phi _{2}(t)y_{t-2}+\phi
_{0}(t)+\epsilon _{t}$.

In the next Section, we illustrate the above claims in the context of a
simple seasonal process with fixed periodicity, and a cyclical model as well.

\begin{comment}

Finally, we should also mention that due to space considerations ARMA models with random coefficients, 
and Markov Switching processes, will be left out of this paper. However, our main result in Theorem 1 applies to these models as well. I WILL CHANGE THIS.

\end{comment}

\section{Periodic AR(2) Model\label{SecPAR(2)}}

Periodic regularities are phenomena occurring at the same season every year,
so analogous to each other that we can view them as recurrences of the same
event. \ Many economic time series are periodic in this sense. \ In the
present Section we express them in a mathematical model, so that we can then
employ it for forecasting and control. Gladyshev (1961) introduced a
technique which still dominates the literature. \ He begins by decomposing
the series into subperiods; then he treats each point within a subperiod as
one part of a multivariate process. \ In this way he transforms a univariate
non-stationary formulation into a multivariate stationary one. Following
Gladyshev, Tiao and Grupe (1980) and Osborn (1991) treated periodic
autoregressions as conventional nonperiodic VAR processes. But, as pointed
out by Lund et al. (2006), even low order specifications can have an
inordinately large numbers of parameters. A PAR($1$) model for daily data,
for example, has $365$ autoregressive parameters. Its time invariant VAR
form will contain $365$ variables, and this is a handicap, especially for
forecasting.

The most common case is the modeling in one dimensional time repetition at
equal intervals. \ In this Section we present a re-examination of the
periodic modeling problem. \ Our approach differs from most of the existing
literature in that we stay within the univariate framework.

A periodic AR model of order $2$ with $l$ seasons, PAR($2;l$), is defined as%
\begin{equation}
y_{t_{s}}=\phi _{0,s}+\phi _{1,s}y_{t_{s}-1}+\phi
_{2,s}y_{t_{s}-2}+\varepsilon _{t_{s}}\text{ }  \label{PAR(2)}
\end{equation}%
where $t_{s}=Tl+s$, $s=1,\ldots ,l$, that is time $t_{s}$ is at the $s$th
season and $\phi _{m,s}$, $m=1,2$, are the periodically (or seasonally)
varying autoregressive coefficients. For example, if $s=l$ (that is, we are
at the $l$th season) then the periodically varying coefficients are $\phi
_{m,l}$ whereas if $s=1$ (that is, we are at the $1$st season) then the
periodically varying coefficients are $\phi _{m,1}$; $\phi _{0,s}$ is a
periodically varying drift%
\begin{comment}

epsilon is a martingale difference.

\end{comment}. The above process nests the AR($2$) model as a special case
if we assume that the drift and all the AR parameters are constant, that is: 
$\phi _{m,s}=\varphi _{m}$, $m=0,1,2$, for all $t$.

The PAR($2;l$) model can be expressed as the time varying AR($2$) model in
eq.(\ref{TVAR(P)}): 
\begin{equation*}
y_{t}=\phi _{0}(t)+\phi _{1}(t)y_{t-1}+\phi _{2}(t)y_{t-2}+\varepsilon _{t}%
\text{,}
\end{equation*}%
where $\phi _{m}(t)=\phi _{m}(\tau _{Tl})$, $m=0,1,2$, $\tau _{Tl}=t-Tl$,
are the periodically (or seasonally) varying autoregressive coefficients: $%
\phi _{m,s}\triangleq \phi _{m}(Tl+s)$, $s=1,\ldots ,l$.

\begin{comment}  

parameters or coefficents?

\end{comment}

\begin{comment}  

Mention heteroscedasticity details from Timmermann's paper
Mentioned also TAR models.
\end{comment}

\begin{comment}  

To allow for statistical treatment of model (<ref>TVAR(P)</ref>) a fundamental assumption is needed: 
the parameters {f0(t),f_{m}(t),?_{l}(t), m=1,…,p; l=1,…,q} are smooth functions. For a nonstationary model
 (i.e., eq. (<ref>TVAR(P)</ref>) with non stochastic time varying coefficients) ergodicity cannot be guaranteed.

<footnote>For the concept of ergodicity in a non stationary setting see, for example, 
Billingsley (1995).
</footnote>
 Grillenzoni (2000) states that a non stationary process is stochastically stable if it is
 bounded in probability. He derives a sufficient condition for the process (<ref>TVAR(P)</ref>) with e_{t}~i.i.d. (0,s²) 
and 0<M_{l}<s²<M<8 to be second order. In particular, the polynomials F_{t}(z?Ή) should have roots whose 
realizations entirely lie inside the unit circle, with the exception, at most, of finite set of points.
<footnote>
Grillenzoni (2000) also show that this condition allows many other stability properties, such as irreducibility, 
recurrence, non-evanescnce and tightness.
</footnote>

\end{comment}

For the PAR($2;l$) model the continuant matrix $\mathbf{\Phi }_{t,nl}$ in
eq. (\ref{FAIMAR(p)X}) (we assume that information is given at time $\tau
_{nl}=t-nl$ for ease of exposition; it can of course be given at any time $%
\tau _{nl+s}=t-nl-s$) can be expressed as a block Toeplitz matrix. Thus, we
have

\begin{equation}
\xi _{t,nl}=\left \vert \mathbf{\Phi }_{t,nl}\right \vert ,
\label{BlockToeplitz}
\end{equation}%
with%
\begin{equation*}
\mathbf{\Phi }_{t,nl}=\left( 
\begin{array}{lllll}
\mathbf{\Phi }_{\tau _{(n-1)l},l} & \overline{\mathbf{0}} &  &  &  \\ 
\widetilde{\mathbf{0}}_{\tau _{n-2}} & \mathbf{\Phi }_{\tau _{(n-2)l},l} & 
\overline{\mathbf{0}} &  &  \\ 
& \ddots & \ddots & \ddots &  \\ 
&  & \widetilde{\mathbf{0}}_{\tau _{1}} & \mathbf{\Phi }_{\tau _{l},l} & 
\overline{\mathbf{0}} \\ 
&  &  & \widetilde{\mathbf{0}}_{t} & \mathbf{\Phi }_{t,l}%
\end{array}%
\right) ,
\end{equation*}%
where $\overline{\mathbf{0}}$ is an $l\times l$ matrix of zeros except for $%
-1$ in its ($l,1$) entry; $\widetilde{\mathbf{0}}_{t}$ is an $l\times l$
matrix of zeros except $\phi _{2}(t-l+1)$, in its ($1,l$) entry. \ Since $%
\phi _{m}(\tau _{Tl})=\phi _{m}(t)$: $\widetilde{\mathbf{0}}_{\tau _{Tl}}=%
\widetilde{\mathbf{0}}_{t}$ and $\mathbf{\Phi }_{\tau _{Tl},l}=\mathbf{\Phi }%
_{t,l}$. Thus the block diagonal matrix $\mathbf{\Phi }_{t,nl}$ can be
written as%
\begin{equation}
\mathbf{\Phi }_{t,nl}=\left( 
\begin{array}{lllll}
\mathbf{\Phi }_{t,l} & \overline{\mathbf{0}} &  &  &  \\ 
\widetilde{\mathbf{0}}_{t} & \mathbf{\Phi }_{t,l} & \overline{\mathbf{0}} & 
&  \\ 
& \ddots & \ddots & \ddots &  \\ 
&  & \widetilde{\mathbf{0}}_{t} & \mathbf{\Phi }_{t,l} & \overline{\mathbf{0}%
} \\ 
&  &  & \widetilde{\mathbf{0}}_{t} & \mathbf{\Phi }_{t,l}%
\end{array}%
\right) ,  \label{Block Toeplitz1}
\end{equation}%
where $\mathbf{\Phi }_{t,l}$ is the continuant or tridiagonal matrix $%
\mathbf{\Phi }_{t,k}$ matrix defined in eq. (\ref{FAIMAR(p)X}) when $k=l$.
That is 
\begin{equation}
\mathbf{\Phi }_{t,l}=\left( 
\begin{array}{cccccc}
\phi _{1}(\tau _{l}+1) & -1 &  &  &  &  \\ 
\phi _{2}(\tau _{l}+2) & \phi _{1}(\tau _{l}+2) & -1 &  &  &  \\ 
& \phi _{2}(\tau _{l}+3) & \phi _{1}(\tau _{l}+3) & -1 &  &  \\ 
&  & \ddots & \ddots & \ddots &  \\ 
&  &  & \phi _{2}(t-1) & \phi _{1}(t-1) & -1 \\ 
&  &  &  & \phi _{2}(t) & \phi _{1}(t)%
\end{array}%
\right) .  \label{FAIM1}
\end{equation}

\begin{proposition}
\label{PropGenSolPer}The general solution of eq. (\ref{PAR(2)}) with free
constants (initial condition values) $y_{t-nl}$, $y_{t-nl-1}$ is given by 
\begin{equation}
y_{t,nl}^{gen}=y_{t,nl}^{hom}+y_{t,nl}^{par},
\end{equation}%
where 
\begin{eqnarray*}
y_{t,nl}^{hom} &=&\xi _{t,nl}y_{t-nl}+\phi _{2}(t-nl+1)\xi
_{t,nl-1}y_{t-nl-1}, \\
y_{t,nl}^{par} &=&\sum_{s=0}^{l-1}\sum_{T=0}^{n-1}\xi _{t,Tl+s}\phi
_{0}(t-s)+\sum_{i=0}^{nl-1}\xi _{t,i}\varepsilon _{t-i},
\end{eqnarray*}%
and $\xi _{t,nl}$ is given either in eq. (\ref{BlockToeplitz}) or in
Proposition (\ref{ProPAR(2;l)KSI}) in Appendix \ref{SecAppendABAR}.
\end{proposition}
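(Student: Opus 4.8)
The plan is to obtain Proposition \ref{PropGenSolPer} as a direct specialization of Theorem \ref{TheoGenSol}, exploiting the fact, already established in the discussion preceding the statement, that the PAR($2;l$) model (\ref{PAR(2)}) is nothing but an instance of the TV-AR($2$) model (\ref{TVAR(P)}) whose coefficients happen to satisfy the $l$-periodicity relation $\phi_m(t-Tl)=\phi_m(t)$. Accordingly, the first step is simply to set $k=nl$ in the general-solution formula (\ref{TVAR(p)SOL}). This immediately delivers the homogeneous part $y_{t,nl}^{hom}=\xi_{t,nl}y_{t-nl}+\phi_2(t-nl+1)\xi_{t,nl-1}y_{t-nl-1}$ verbatim, and it delivers the error component $\sum_{i=0}^{nl-1}\xi_{t,i}\varepsilon_{t-i}$ of the particular part verbatim as well; no additional argument is required for these two pieces.

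The only genuine rewriting concerns the drift component, which Theorem \ref{TheoGenSol} produces (at $k=nl$) as the single sum $\sum_{i=0}^{nl-1}\xi_{t,i}\phi_0(t-i)$. Here I would split the summation index as $i=Tl+s$ with $T\in\{0,\dots,n-1\}$ and $s\in\{0,\dots,l-1\}$. Since $(T,s)\mapsto Tl+s$ is a bijection onto $\{0,\dots,nl-1\}$, the single sum becomes $\sum_{s=0}^{l-1}\sum_{T=0}^{n-1}\xi_{t,Tl+s}\phi_0(t-Tl-s)$, and invoking the $l$-periodicity of the drift, $\phi_0(t-Tl-s)=\phi_0(t-s)$, removes the dependence on $T$ inside $\phi_0$ and produces exactly $\sum_{s=0}^{l-1}\sum_{T=0}^{n-1}\xi_{t,Tl+s}\phi_0(t-s)$, as claimed.

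It then remains to justify the block-Toeplitz evaluation of $\xi_{t,nl}$ asserted in (\ref{BlockToeplitz}). For this I would verify directly that, under periodicity, the $nl\times nl$ continuant matrix $\mathbf{\Phi}_{t,nl}$ of (\ref{FAIMAR(p)X}) with $k=nl$ partitions into the block lower-bidiagonal form (\ref{Block Toeplitz1}). Reading off the entries via the explicit $(i,j)$ rule: the $r$th diagonal entry of the $a$th diagonal $l\times l$ block is $\phi_1(t-nl+(a-1)l+r)$, whose argument is congruent mod $l$ to $t-l+r$, so every diagonal block equals $\mathbf{\Phi}_{t,l}$ of (\ref{FAIM1}); the single subdiagonal entry linking consecutive blocks is $\phi_2(t-nl+al+1)$, congruent mod $l$ to $t-l+1$ and hence matching the $(1,l)$ entry of $\widetilde{\mathbf{0}}_t$; and the single superdiagonal linking entry is the constant $-1$, matching the $(l,1)$ entry of $\overline{\mathbf{0}}$. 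This identifies $\xi_{t,nl}=|\mathbf{\Phi}_{t,nl}|$ with the block-Toeplitz determinant, its non-determinant closed form being supplied by Proposition \ref{ProPAR(2;l)KSI} in the appendix.

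I do not anticipate a deep obstacle: the result is essentially Theorem \ref{TheoGenSol} read at $k=nl$ together with periodicity. The steps requiring the most care are the bookkeeping ones, namely confirming that $(T,s)\mapsto Tl+s$ is indeed a bijection of the stated index ranges and that the three congruences mod $l$ align the continuant entries with the blocks $\mathbf{\Phi}_{t,l}$, $\widetilde{\mathbf{0}}_t$ and $\overline{\mathbf{0}}$ without an off-by-one slip in the seasonal indexing. These are precisely the places where a misalignment of the season counter $s$ or the period counter $T$ would be easiest to introduce, so they warrant explicit checking.
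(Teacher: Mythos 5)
Your proposal is correct and takes essentially the same route as the paper, which simply observes that the Proposition "follows immediately from Theorem 1 and the definition of the periodic model": you specialize Theorem \ref{TheoGenSol} at $k=nl$, re-index the drift sum via the bijection $i=Tl+s$ together with $l$-periodicity of $\phi_0$, and confirm the block-Toeplitz form of $\mathbf{\Phi}_{t,nl}$. The paper leaves these bookkeeping steps implicit; your write-up fills them in correctly.
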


The proof of the above Proposition follows immediately from Theorem 1 and
the definition of the periodic model (\ref{PAR(2)}).

\begin{comment}  

Closed form solution.
\end{comment}

\subsection{Cyclical AR($2$) process\label{SubSecCAR(2)}}

Some economic series exhibit oscillations which are not associated with the
same fixed period every year. Despite their lack of fixed periodicity, such
time series are predictable to a certain degree.

Rather than setting up a general model from first principles, we
re-interpret the periodic model with some modifications. \ In particular, we
now assume that we have $d+1$ cycles, with $0\leq d\leq l-1$. \ Then, $%
s_{j}=l_{j-1}+1,\ldots ,l_{j}$, $j=1,\ldots ,d+1$ (with $0=l_{0}<l_{1}<%
\ldots <l_{d}<l_{d+1}=l$) are the seasons in cycle $j$. Thus we can write $%
\phi _{m,s_{j}}\triangleq \phi _{m}(t_{s_{j}})$, $m=0,1,2$, $%
t_{s_{j}}=Tl+s_{j}$. A CAR($2$) model with $l$ seasons and $d+1$ cycles (CAR(%
$2;l;d$)) is defined as%
\begin{equation}
y_{t_{s_{j}}}=\phi _{0,s_{j}}+\phi _{1,s_{j}}y_{t_{s_{j}}-1}+\phi
_{2,s_{j}}y_{t_{s_{j}}-2}+\varepsilon _{t_{s_{j}}}\text{.}
\label{CAR(2;l;d)}
\end{equation}

For the above process, $\mathbf{\Phi }_{t,l}$ in eq. (\ref{FAIM1}) can be
written as

\begin{equation}
\mathbf{\Phi }_{t,l}=\left[ 
\begin{array}{lllll}
\mathbf{\Phi }_{t-l_{d},l_{d+1}-l_{d}} & \overline{\mathbf{0}}_{d} &  &  & 
\\ 
\widetilde{\mathbf{0}}_{d} & \mathbf{\Phi }_{t-l_{d-1},l_{d}-l_{d-1}} & 
\overline{\mathbf{0}}_{d-1} &  &  \\ 
& \ddots  & \ddots  & \ddots  &  \\ 
&  & \widetilde{\mathbf{0}}_{2} & \mathbf{\Phi }_{t-l_{1},l_{2}-l_{1}} & 
\overline{\mathbf{0}}_{1} \\ 
&  &  & \widetilde{\mathbf{0}}_{1} & \mathbf{\Phi }_{t,l_{1}}%
\end{array}%
\right] ,  \label{PHI(l)CAR2}
\end{equation}%
where first, the $j$\ ($j=1,\ldots ,d+1$) block of the main diagonal is $%
\mathbf{\Phi }_{t-l_{j-1},l_{j}-l_{j-1}}$, that is a $(l_{j}-l_{j-1})\times
(l_{j}-l_{j-1})$\ banded `time varying' Toeplitz matrix of bandwidth $3$:%
\begin{equation*}
\mathbf{\Phi }_{t-l_{j-1},l_{j}-l_{j-1}}=\left( 
\begin{array}{cccccc}
\phi _{1}(\tau _{l_{j}}+1) & -1 &  &  &  &  \\ 
\phi _{2}(\tau _{l_{j}}+2) & \phi _{1}(\tau _{l_{j}}+2) & -1 &  &  &  \\ 
& \phi _{2}(\tau _{l_{j}}+3) & \phi _{1}(\tau _{l_{j}}+3) & -1 &  &  \\ 
&  & \ddots  & \ddots  & \ddots  &  \\ 
&  &  & \phi _{2}(\tau _{l_{j-1}}-1) & \phi _{1}(\tau _{l_{j-1}}-1) & -1 \\ 
&  &  &  & \phi _{2}(\tau _{l_{j-1}}) & \phi _{1}(\tau _{l_{j-1}})%
\end{array}%
\right) ,
\end{equation*}%
second, the $j$\ ($j=1,\ldots ,d$) block of the subdiagonal, $\widetilde{%
\mathbf{0}}_{j}$, is a $(l_{j}-l_{j-1})\times (l_{j+1}-l_{j})$\ matrix of
zeros except for $\phi _{2}(\tau _{l_{j}}+1)$\ in its $1\times
(l_{j+1}-l_{j})$\ entry, and third, the $j$\ block of the superdiagonal $%
\overline{\mathbf{0}}_{j}$, is a $(l_{j+1}-l_{j})\times (l_{j}-l_{j-1})$
matrix of zeros except for $-1$ in its $(l_{j+1}-l_{j})\times 1$ entry, and
iv) there are zeros elsewhere.

\section{Abrupt Breaks}

Our general result has been presented in Section \ref{SubSecMainTheorem}. In
the current Section, we discuss still another example in order\ to both make
our analysis clearer and to demonstrate its applicability. One important
case is that of $r$, $0\leq r\leq k-1$, abrupt breaks at times $t-k_{1}$, $%
t-k_{2}$, $\ldots $ , $t-k_{r}$, where $0=k_{0}<k_{1}<k_{2}<\cdots
<k_{r}<k_{r+1}=k$, $k_{r}\in \mathbb{Z}^{+}$, and $k_{r}$ is finite. That
is, between $t-k=t-k_{r+1}$\ and the present time $t=t-k_{0}$ the AR($2$)
process contains $r$ structural breaks and the switch from one set of
parameters to another is abrupt. In particular%
\begin{equation}
y_{\tau }=\phi _{0,j}+\phi _{1,j}y_{\tau -1}+\phi _{2,j}y_{\tau -2}+\sigma
_{j}^{2}e_{\tau ,j},  \label{ABAR(2)}
\end{equation}%
for $\tau =t-k_{j-1},\ldots ,t-k_{j}+1,j=1,\ldots ,r+1$ and $e_{t,j}\sim $
i.i.d $(0,1)$ $\forall $ $t,j$. Within the class of AR($2$) processes, this
specification is quite general and allows for intercept and slope shifts as
well as changes in the error variances (see also Pesaran et al., 2006). Each
regime $j$\ is characterized by a vector of autoregressive coefficients: $%
\phi _{0,j}$, $\mathbf{\phi }_{j}^{\prime }=(\phi _{1,j},\phi _{2,j})$, and
an error term variance, $0<\sigma _{j}^{2}<M_{j}<\infty $ $\forall $ $j$, $%
M_{j}\in \mathbb{R}^{+}$. We term this model abrupt breaks AR process of
order ($2;r$) (ABAR($2;r$)).

For the AR($2$) model with $r$ abrupt breaks, $\xi _{t,k}$\ in eq. (\ref%
{KSIAR(2)}) can be written as the determinant of a partitioned (or a block)
tridiagonal matrix 
\begin{equation}
\xi _{t,k}=\det (\mathbf{\Phi }_{t,k})=\left \vert 
\begin{array}{lllll}
\mathbf{\Phi }_{t-k_{r},k_{r+1}-k_{r}} & \overline{\mathbf{0}}_{r} &  &  & 
\\ 
\widetilde{\mathbf{0}}_{r} & \mathbf{\Phi }_{t-k_{r-1},k_{r}-k_{r-1}} & 
\overline{\mathbf{0}}_{r-1} &  &  \\ 
& \ddots  & \ddots  & \ddots  &  \\ 
&  & \widetilde{\mathbf{0}}_{2} & \mathbf{\Phi }_{t-k_{1},k_{2}-k_{1}} & 
\overline{\mathbf{0}}_{1} \\ 
&  &  & \widetilde{\mathbf{0}}_{1} & \mathbf{\Phi }_{t,k_{1}}%
\end{array}%
\right \vert ,  \label{KSIABAR2}
\end{equation}%
where first, the $j$\ ($j=1,\ldots ,r+1$) block of the main diagonal is $%
\mathbf{\Phi }_{t-k_{j-1},k_{j}-k_{j-1}}$, \newline
that is a $(k_{j}-k_{j-1})\times (k_{j}-k_{j-1})$\ banded Toeplitz matrix of
bandwidth $3$:%
\begin{equation*}
\mathbf{\Phi }_{t-k_{j-1},k_{j}-k_{j-1}}=\left( 
\begin{array}{ccccc}
\phi _{1,j} & -1 &  &  &  \\ 
\phi _{2,j} & \phi _{1,j} & -1 &  &  \\ 
& \ddots  & \ddots  & \ddots  &  \\ 
&  & \phi _{2,j} & \phi _{1,j} & -1 \\ 
&  &  & \phi _{2,j} & \phi _{1,j}%
\end{array}%
\right) ,
\end{equation*}%
with $\xi _{t-k_{j-1},k_{j}-k_{j-1}}=\left \vert \mathbf{\Phi }%
_{t-k_{j-1},k_{j}-k_{j-1}}\right \vert =\frac{1}{\lambda _{1,j}-\lambda _{2,j}%
}(\lambda _{1,j}^{k_{j}-k_{j-1}+1}-\lambda _{2,j}^{k_{j}-k_{j-1}+1})$, and
the second equality holds if and only if $\lambda _{1,j}\neq \lambda _{2,j}$
(where $1-\phi _{1,j}B-\phi _{2,j}B^{2}=(1-\lambda _{1,j}B)(1-\lambda
_{2,j}B)$), second, the $j$\ ($j=1,\ldots ,r$) block of the subdiagonal, $%
\widetilde{\mathbf{0}}_{j}$, is a $(k_{j}-k_{j-1})\times (k_{j+1}-k_{j})$\
matrix of zeros except for $\phi _{2,j}$\ in its $1\times (k_{j+1}-k_{j})$\
entry, and third, the $j$\ block of the superdiagonal $\overline{\mathbf{0}}%
_{j}$, is a $(k_{j+1}-k_{j})\times (k_{j}-k_{j-1})$ matrix of zeros except
for $-1$ in its $(k_{j+1}-k_{j})\times 1$ entry, and iv) there are zeros
elsewhere.

\begin{corollary}
\label{CorGenSolStrBreak}The general solution of the ABAR($2;r$) model in
eq. (\ref{ABAR(2)}) with free constants (initial condition values) $y_{t-k}$%
, $y_{t-k-1}$, is given by 
\begin{equation*}
y_{t,k}^{gen}=y_{t,k}^{hom}+y_{t,k}^{par},
\end{equation*}%
where%
\begin{eqnarray*}
y_{t,k}^{hom} &=&\xi _{t,k}y_{t-k}+\phi _{2}(t-k+1)\xi _{t,k-1}y_{t-k-1}, \\
y_{t,k}^{par} &=&\sum_{j=1}^{r+1}\phi _{0,j}\sum_{i=k_{j-1}}^{k_{j}-1}\xi
_{t,i}+\sum_{j=1}^{r+1}\sigma _{j}^{2}\sum_{i=k_{j-1}}^{k_{j}-1}\xi
_{t,i}e_{t-i,j},
\end{eqnarray*}%
and $\xi _{t,k}$ is given either in eq. (\ref{KSIABAR2}) or in Proposition %
\ref{ProABAR(2,n)KSI} (see Appendix B).%
\begin{comment}  

Maybe we need to write the Phi matrix again for this case. I DID IT.
\end{comment}
\end{corollary}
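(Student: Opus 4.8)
The plan is to obtain the corollary as a direct specialization of Theorem~\ref{TheoGenSol}, since the ABAR($2;r$) model of eq.~(\ref{ABAR(2)}) is nothing but the TV-AR($2$) process of eq.~(\ref{TVAR(P)}) in which the coefficients are piecewise constant. First I would record the dictionary between the two models: for $\tau$ in regime $j$, i.e.\ for $t-k_{j}+1\leq \tau \leq t-k_{j-1}$, one sets $\phi_{m}(\tau)=\phi_{m,j}$ for $m=0,1,2$ and $\varepsilon_{\tau}=\sigma_{j}^{2}e_{\tau,j}$. With this identification in hand the homogeneous part transfers verbatim: the expression $y_{t,k}^{hom}=\xi_{t,k}y_{t-k}+\phi_{2}(t-k+1)\xi_{t,k-1}y_{t-k-1}$ is exactly what Theorem~\ref{TheoGenSol} delivers, with $\phi_{2}(t-k+1)=\phi_{2,r+1}$ because $t-k+1=t-k_{r+1}+1$ sits at the lower end of the last regime.

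The second step is to re-index the particular part. Theorem~\ref{TheoGenSol} yields $y_{t,k}^{par}=\sum_{i=0}^{k-1}\xi_{t,i}\phi_{0}(t-i)+\sum_{i=0}^{k-1}\xi_{t,i}\varepsilon_{t-i}$ as a single sum over the lags $i=0,\ldots,k-1$. The substitution $i=t-\tau$ maps regime $j$ onto the consecutive block of lags $k_{j-1}\leq i\leq k_{j}-1$, and these $r+1$ blocks partition $\{0,\ldots,k-1\}$ precisely because $k_{0}=0<k_{1}<\cdots<k_{r+1}=k$. On each block $\phi_{0}(t-i)=\phi_{0,j}$ and $\varepsilon_{t-i}=\sigma_{j}^{2}e_{t-i,j}$, so I would split each single sum as $\sum_{j=1}^{r+1}\sum_{i=k_{j-1}}^{k_{j}-1}$ and pull $\phi_{0,j}$ and $\sigma_{j}^{2}$ out of the inner sum, recovering exactly the two stated double sums. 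This part is pure bookkeeping.

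The only genuinely new content is the block-tridiagonal representation of $\xi_{t,k}=\det(\mathbf{\Phi}_{t,k})$ asserted in eq.~(\ref{KSIABAR2}), which I would establish in complete parallel with the periodic case of eq.~(\ref{Block Toeplitz1}). Starting from the continuant matrix $\mathbf{\Phi}_{t,k}$ of eq.~(\ref{FAIMAR(p)X}), constancy of $\phi_{1},\phi_{2}$ within regime $j$ turns the diagonal band running over that regime's lags into the banded Toeplitz block $\mathbf{\Phi}_{t-k_{j-1},k_{j}-k_{j-1}}$, while the lone sub- and super-diagonal entries that straddle a break time supply the coupling blocks $\widetilde{\mathbf{0}}_{j}$ (carrying $\phi_{2,j}$) and $\overline{\mathbf{0}}_{j}$ (carrying $-1$) in the indicated corner positions. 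I would then confirm the closed form $\xi_{t-k_{j-1},k_{j}-k_{j-1}}=(\lambda_{1,j}^{k_{j}-k_{j-1}+1}-\lambda_{2,j}^{k_{j}-k_{j-1}+1})/(\lambda_{1,j}-\lambda_{2,j})$ for each diagonal block from the classical two-term recursion satisfied by a constant-coefficient tridiagonal determinant, valid exactly when the characteristic roots $\lambda_{1,j}\neq\lambda_{2,j}$; the non-determinant expansion of the full $\xi_{t,k}$ is then the content of Proposition~\ref{ProABAR(2,n)KSI} and uses the Hessenbergian-expansion machinery cited in the paper, which lies outside this reduction.

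The hard part, such as it is, will be the index bookkeeping at the block boundaries: ensuring the single coupling entries land in the correct $(1,\cdot)$ and $(\cdot,1)$ corners of $\widetilde{\mathbf{0}}_{j}$ and $\overline{\mathbf{0}}_{j}$, and that the lag-to-regime correspondence $i=t-\tau$ incurs no off-by-one error at any of the $r$ break times. There is no analytic difficulty beyond this, since Theorem~\ref{TheoGenSol} already carries the full weight of solving the difference equation and the corollary merely reorganizes its output along the piecewise-constant structure.
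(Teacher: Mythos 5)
Your proposal is correct and follows essentially the same route as the paper, which simply notes that the Corollary "follows immediately from Theorem 1 and the definition of the ABAR($2;r$) model"; your regime-by-regime identification $\phi_m(\tau)=\phi_{m,j}$, $\varepsilon_\tau=\sigma_j^2 e_{\tau,j}$ and the partition of the lag range $\{0,\ldots,k-1\}$ into the blocks $k_{j-1}\leq i\leq k_j-1$ is exactly the intended (and correct) bookkeeping. The block-tridiagonal form of $\xi_{t,k}$ that you also sketch is stated in the paper's main text prior to the Corollary rather than proved within it, so that portion is supplementary but consistent with the paper's treatment.
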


The proof of the above Corollary follows immediately from Theorem 1 and the
definition of the ABAR($2;r$) model in eq. (\ref{ABAR(2)}).

\section{Prediction and Moment Structure\label{SecSecMom}}

We turn our attention to the fundamental properties of the various TV-AR($2$%
) processes. Armed with a powerful technique for manipulating time varying
models we may now provide a thorough description of the processes (\ref%
{TVAR(P)}) by deriving, first, its multistep ahead predictor, the associated
forecast error and the mean square error; second, the first two
unconditional moments of this process, and third, its covariance structure.

\subsection{Multi Step Forecasts}

Taking the conditional expectation of eq. (\ref{TVAR(p)SOL}) with respect to
the $\sigma $ field $\tciFourier _{\tau _{k}}$ ($\tau _{k}=t-k$) yields the
following Proposition.

\begin{proposition}
\label{ProOptPred}For the TV-AR($2$) model the $k$-step-ahead optimal (in $%
L_{2}$-sense) linear predictor of $y_{t}$, $\mathbb{E}(y_{t}\left \vert
\tciFourier _{\tau _{k}}\right. )$,\ is readily seen to be%
\begin{equation}
\mathbb{E}(y_{t}\left \vert \tciFourier _{\tau _{k}}\right.
)=\sum_{i=0}^{k-1}\xi _{t,i}\phi _{0}(t-i)+\xi _{t,k}y_{t-k}+\phi
_{2}(t-k+1)\xi _{t,k-1}y_{t-k-1}.  \label{CE-TVAR(P)}
\end{equation}%
In addition, the forecast error for the above $k$-step-ahead predictor, $%
\mathbb{FE}(y_{t}\left \vert \tciFourier _{\tau _{k}}\right. )=y_{t}-\mathbb{E%
}[y_{t}\left \vert \tciFourier _{\tau _{k}}\right. ]$, is given by%
\begin{equation}
\mathbb{FE}(y_{t}\left \vert \tciFourier _{\tau _{k}}\right. )=\Xi
_{t,k}(B)\varepsilon _{t}=\sum_{i=0}^{k-1}\xi _{t,i}B^{i}\varepsilon _{t},
\label{FE-TVAR(P)}
\end{equation}%
and it is expressed in terms of $k$ error terms from time $t-k+1$\ to time $t
$; the coefficient of the error term at time $t-i$, $\xi _{t,i}$, is the
determinant of an $i\times i$ matrix ($\Phi _{t,i}$), each nonzero variable
diagonal of which consists of the AR time varying coefficients $\phi _{m}(%
\mathfrak{\cdot })$, $m=1,2$ from time $t-i+m$\ to $t$. \newline
The mean square error is%
\begin{equation}
\mathbb{V}ar[\mathbb{FE(}y_{t}\left \vert \tciFourier _{\tau _{k}}\right.
)]=\Xi _{t,k}^{(2)}(B)\sigma _{t}^{2}=\sum_{i=0}^{k-1}\xi _{t,i}^{2}B^{i}%
\mathbb{\sigma }_{t}^{2},  \label{VFE-TVAR(P)}
\end{equation}%
which is expressed in terms of $k$\ variances from time $t-k+1$\ to time $t$%
, with time varying coefficients (the squared $\xi ${\footnotesize s})%
\begin{comment}  

To make clear the contribution of this result in relation to the existing literature:

To comare our result with with those contained in other papers we recall next their main states.

Grillenzoni (2000) gives an approximate solution for the k-step-ahead-predictor 
based on the formulas for stationary models.

Singh and Peiris (1987) and Kowalski and Szynal (1990,1991) 
provide useful characterizations of the k-step optimal (in L^2 sense) 
linear predictor y_t+k/k for the state y_t+k based on the process {y_tau, tau<t}.

In particular oth papers show that the optimal linear
 predictor can be recursively obtained (see their Theorems 3.1 and 3, respectively).

\end{comment}.
\end{proposition}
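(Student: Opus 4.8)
The Proposition claims three things for the TV-AR(2) model: (1) the $k$-step-ahead optimal linear predictor $\mathbb{E}(y_t\mid\mathcal{F}_{\tau_k})$ equals the homogeneous part plus the drift sum from the general solution; (2) the forecast error is the error-term sum $\sum_{i=0}^{k-1}\xi_{t,i}B^i\varepsilon_t$; and (3) the mean square error is $\sum_{i=0}^{k-1}\xi_{t,i}^2 B^i\sigma_t^2$. Since Theorem 1 already gives the explicit closed-form general solution $y_t=y_{t,k}^{hom}+y_{t,k}^{par}$ with $y_{t,k}^{par}=\sum_{i=0}^{k-1}\xi_{t,i}\phi_0(t-i)+\sum_{i=0}^{k-1}\xi_{t,i}\varepsilon_{t-i}$, essentially everything here is a consequence of applying a conditional expectation to that formula. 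The proof should therefore be short.

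**Plan.** The plan is to condition the Theorem 1 representation of $y_t$ on $\mathcal{F}_{\tau_k}$ and exploit measurability and the martingale-difference hypothesis on the errors. First I would note that the coefficients $\xi_{t,i}$ are deterministic (they are continuant determinants built from the non-stochastic $\phi_m(\cdot)$), and that the initial values $y_{t-k}$, $y_{t-k-1}$ and the drift terms $\phi_0(t-i)$ are all $\mathcal{F}_{\tau_k}$-measurable. Hence the homogeneous part and the drift sum pass through the conditional expectation unchanged, giving $\mathbb{E}[y_{t,k}^{hom}\mid\mathcal{F}_{\tau_k}]=y_{t,k}^{hom}$ and $\mathbb{E}[\sum_{i=0}^{k-1}\xi_{t,i}\phi_0(t-i)\mid\mathcal{F}_{\tau_k}]=\sum_{i=0}^{k-1}\xi_{t,i}\phi_0(t-i)$. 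For the error sum, the terms run over $i=0,\dots,k-1$, i.e.\ over $\varepsilon_t,\dots,\varepsilon_{t-k+1}$, each of which has index strictly greater than $\tau_k=t-k$; by the assumption $\mathbb{E}[\varepsilon_\tau\mid\mathcal{F}_{\tau-1}]=0$ together with the tower property, each satisfies $\mathbb{E}[\varepsilon_{t-i}\mid\mathcal{F}_{\tau_k}]=0$ for $0\le i\le k-1$. This kills the entire stochastic part and yields (\ref{CE-TVAR(P)}).

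**Forecast error and MSE.** For the second claim I would simply subtract: $\mathbb{FE}=y_t-\mathbb{E}[y_t\mid\mathcal{F}_{\tau_k}]$ leaves exactly the error sum $\sum_{i=0}^{k-1}\xi_{t,i}\varepsilon_{t-i}$, which in lag-operator notation is $\sum_{i=0}^{k-1}\xi_{t,i}B^i\varepsilon_t=\Xi_{t,k}(B)\varepsilon_t$, establishing (\ref{FE-TVAR(P)}). For the MSE I would compute $\mathbb{V}ar[\mathbb{FE}]=\mathbb{E}[(\sum_{i=0}^{k-1}\xi_{t,i}\varepsilon_{t-i})^2]$. Expanding the square gives diagonal terms $\sum_{i}\xi_{t,i}^2\,\mathbb{E}[\varepsilon_{t-i}^2]=\sum_i\xi_{t,i}^2\sigma_{t-i}^2$ and cross terms $\xi_{t,i}\xi_{t,j}\,\mathbb{E}[\varepsilon_{t-i}\varepsilon_{t-j}]$ for $i\ne j$. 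The cross terms vanish because the $\varepsilon_t$ are serially uncorrelated (equivalently, by the martingale-difference property and the tower rule, $\mathbb{E}[\varepsilon_{t-i}\varepsilon_{t-j}]=0$ for $i\ne j$). What remains is $\sum_{i=0}^{k-1}\xi_{t,i}^2\sigma_{t-i}^2$, which the paper writes compactly as $\Xi_{t,k}^{(2)}(B)\sigma_t^2=\sum_{i=0}^{k-1}\xi_{t,i}^2 B^i\sigma_t^2$, giving (\ref{VFE-TVAR(P)}).

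**Main obstacle.** There is no deep obstacle here; the result is a corollary of Theorem 1, and the only real content is bookkeeping about which quantities are $\mathcal{F}_{\tau_k}$-measurable and the vanishing of the conditional means and cross-covariances of the innovations. The one point requiring care is the optimality claim: I would justify that (\ref{CE-TVAR(P)}) is not merely \emph{an} expression but the $L_2$-optimal linear predictor, which follows because the forecast error $\sum_{i=0}^{k-1}\xi_{t,i}\varepsilon_{t-i}$ is orthogonal (in the $H=L_2(\Omega,\mathcal{F}_t,P)$ inner product) to every $\mathcal{F}_{\tau_k}$-measurable random variable — again a direct consequence of $\mathbb{E}[\varepsilon_{t-i}\mid\mathcal{F}_{\tau_k}]=0$. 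A minor technical remark is that the heteroskedastic setting (Remark \ref{RemHeter}) means $\sigma_{t-i}^2$ need not be constant, so the MSE retains time-varying variances under the sum; the notation $\sigma_t^2$ with the $B^i$ operator is precisely what encodes this.
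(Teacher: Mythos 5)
Your proposal is correct and follows exactly the route the paper takes: the paper derives this Proposition in one line by taking the conditional expectation of the general solution in Theorem \ref{TheoGenSol} with respect to $\tciFourier_{\tau_k}$, which is precisely your argument (measurability of the homogeneous and drift parts, vanishing of $\mathbb{E}[\varepsilon_{t-i}\mid\tciFourier_{\tau_k}]$ for $0\le i\le k-1$, and serial uncorrelatedness for the cross terms in the MSE). Your additional remarks on $L_2$-optimality via orthogonality and on the heteroskedastic $\sigma_{t-i}^2$ are consistent with the paper and merely make explicit what it leaves implicit.
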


The following Corollary presents results for the forecasts from PAR and CAR
processes.

\begin{corollary}
For the PAR($2;l$) and the CAR($2;l;d$) models (see eqs. (\ref{PAR(2)}) and (%
\ref{CAR(2;l;d)}) respectively) the $nl$-step-ahead optimal linear predictor
is given by eq. (\ref{CE-TVAR(P)}) (with $k=nl$) in Proposition (\ref%
{ProOptPred}) where%
\begin{eqnarray*}
\sum_{i=0}^{nl-1}\xi _{t,i}\phi _{0}(t-i)
&=&\sum_{s=0}^{l-1}\sum_{T=0}^{n-1}\xi _{t,Tl+s}\phi _{0}(t-s)\text{ \  \
(PAR model)}, \\
\sum_{i=0}^{nl-1}\xi _{t,i}\phi _{0}(t-i)
&=&\sum_{j=1}^{d+1}\sum_{s_{j}=l_{j-1}+1}^{l_{j}}\sum_{T=0}^{n-1}\xi
_{t,Tl+s_{j}}\phi _{0}(t-s_{j})\text{ \  \ (CAR model)},
\end{eqnarray*}%
and $\xi _{t,Tl+s}$, $\xi _{t,Tl+s_{j}}$ are given in Proposition (\ref%
{ProPAR(2;l)KSI}) and Corollary (\ref{CorCAR(2,l,r)KSIX}) respectively (see
Appendix B). \newline
Finally, for the ABAR($2;r$) model in eq. (\ref{ABAR(2)}) the $k$-step-ahead
optimal linear predictor is given by eq. (\ref{CE-TVAR(P)}) where 
\begin{equation*}
\sum_{i=0}^{k-1}\xi _{t,i}\phi _{0}(t-i)=\sum_{j=1}^{r+1}\phi
_{0,j}\sum_{i=k_{j-1}}^{kj-1}\xi _{t,i},
\end{equation*}%
and $\xi _{t,i}$ is given either in eq. (\ref{KSIABAR2}) or in Proposition (%
\ref{ProABAR(2,n)KSI}) (see Appendix B).
\end{corollary}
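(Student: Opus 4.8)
The plan is to obtain all three predictors as immediate specializations of Proposition~\ref{ProOptPred}, so that the only substantive work is the re-indexing of the deterministic drift sum $\sum_{i=0}^{k-1}\xi_{t,i}\phi_0(t-i)$ in each case. First I would observe that the PAR($2;l$), CAR($2;l;d$) and ABAR($2;r$) processes are, by their very construction in eqs.~(\ref{PAR(2)}), (\ref{CAR(2;l;d)}) and (\ref{ABAR(2)}), nothing but the generic TV-AR($2$) model of eq.~(\ref{TVAR(P)}) with a particular time dependence imposed on the coefficients $\phi_m(\cdot)$. Hence eq.~(\ref{CE-TVAR(P)}) applies verbatim, taking $k=nl$ for the periodic and cyclical cases. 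In particular the homogeneous block $\xi_{t,k}y_{t-k}+\phi_2(t-k+1)\xi_{t,k-1}y_{t-k-1}$ carries over unchanged, and so does the fact — already established in Proposition~\ref{ProOptPred} — that the conditional expectation annihilates the future innovations $\varepsilon_{t-i}$ ($i=0,\ldots,k-1$, all dated after $\tau_k$) while leaving the deterministic, and hence $\tciFourier_{\tau_k}$-measurable, coefficients $\xi_{t,i}$ untouched. The corollary therefore reduces to three bookkeeping identities for the drift term.

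For the PAR model I would write the single index $i\in\{0,\ldots,nl-1\}$ in base $l$, i.e. $i=Tl+s$ with $T\in\{0,\ldots,n-1\}$ and $s\in\{0,\ldots,l-1\}$, a map that is a bijection onto $\{0,\ldots,nl-1\}$. Periodicity of the drift, $\phi_0(t-Tl-s)=\phi_0(t-s)$, which is exactly the defining relation $\phi_m(t)=\phi_m(\tau_{Tl})$ specialized to $m=0$, then converts the single sum into the double sum $\sum_{s=0}^{l-1}\sum_{T=0}^{n-1}\xi_{t,Tl+s}\phi_0(t-s)$. The CAR case is the same partition refined one level further: I would split the season index according to the cycle it belongs to, $s_j\in\{l_{j-1}+1,\ldots,l_j\}$ for $j=1,\ldots,d+1$. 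Since the cycles partition the set of seasons (because $0=l_0<l_1<\cdots<l_{d+1}=l$), summing over the $d+1$ cycles recovers the full seasonal range, so $\sum_{s}$ becomes $\sum_{j=1}^{d+1}\sum_{s_j=l_{j-1}+1}^{l_j}$ and yields the stated triple sum; no new ingredient beyond periodicity is needed.

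For the ABAR model the mechanism is complementary. Here the drift is piecewise constant rather than periodic: on the index block $i\in\{k_{j-1},\ldots,k_j-1\}$ — which corresponds to the time window $\tau=t-k_j+1,\ldots,t-k_{j-1}$ of regime $j$ — one has $\phi_0(t-i)=\phi_{0,j}$. Because the blocks $\{k_{j-1},\ldots,k_j-1\}$, $j=1,\ldots,r+1$, partition $\{0,\ldots,k-1\}$ (as $0=k_0<k_1<\cdots<k_{r+1}=k$), the single sum factorizes as $\sum_{j=1}^{r+1}\phi_{0,j}\sum_{i=k_{j-1}}^{k_j-1}\xi_{t,i}$, as claimed. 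Finally I would simply cite Proposition~\ref{ProPAR(2;l)KSI}, Corollary~\ref{CorCAR(2,l,r)KSIX} and Proposition~\ref{ProABAR(2,n)KSI} for the explicit determinantal forms of $\xi_{t,Tl+s}$, $\xi_{t,Tl+s_j}$ and $\xi_{t,i}$, which feed into the formulas without any further manipulation.

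The main obstacle, such as it is, is purely one of indexing hygiene: verifying that each regrouping of the drift sum is driven by an honest bijection (or partition) of index sets, and that $\phi_0$ really is constant on each block — periodically for PAR and CAR, regime-wise for ABAR. No analytic estimate and no new structural result is required, since the substantive content was already delivered once and for all by Theorem~\ref{TheoGenSol} and Proposition~\ref{ProOptPred}.
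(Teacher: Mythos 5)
Your proposal is correct and follows exactly the route the paper intends: the paper gives no explicit proof of this corollary, but (as with Proposition \ref{PropGenSolPer} and Corollary \ref{CorGenSolStrBreak}) it is meant to follow immediately from Proposition \ref{ProOptPred} together with the periodic, cyclical, or regime-wise structure of $\phi_0(\cdot)$, which is precisely the specialization-plus-reindexing argument you carry out.
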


Franses and Paap (2005) employ the vector season representation to compute
forecasts and forecast error variances for a PAR($1;4$) process. In this way
forecasts can be generated along the same lines with quadrivariate VAR($1$)
models. Franses (1996a) derives multi-step forecast error variances for
low-order PAR models with $l=4$, using the VS representation. But, if $l$ is
large even low order specifications will have large VAR representations and
this is a handicap especially for forecasting. In contrast, our formulae
using the univariate framework allow a fast computation of the
multi-step-ahead predictors even if $l$ is large.

In what follows we give conditions for the first and second unconditional
moments of model (\ref{TVAR(P)}) to exist.\qquad

\subsection{Wold Representation}

First, we need an assumption.

Assumption A.1. $\sum_{i=0}^{k}\xi _{t,i}\phi _{0}(t-i)$ as $k\rightarrow
\infty $ converges $\forall $ $t$ and $\tsum \nolimits_{i=0}^{\infty
}\sup_{t}(\xi _{t,i}^{2}\sigma _{t-i}^{2})<M_{u}<\infty $ ($M_{u}\in \mathbb{%
R}^{+}$).

\begin{comment}

As pointed out by Kowalski and Szynal (1991) a zero mean TV-ARMA(p,q) model 
with non stochastic coefficients is said to be AR regular iff ?_{r=0}^{8}?_{t,r}^{*}e_{t-r} converges a.s. or ?_{r=0}^{8}?^
{2r}s_{t-r}²<8, where ?=?(F)+e, e>0, ?(F)=max{|z_{m}|, F(z_{m}?¹)=0, m=1,…,p}. The inequality in A.2 is a sufficient 
condition for ?_{r=0}^{8}?_{t,r}^{*}e_{t-r} to converge a.s. for all t. It also implies that ?_{r=0}^{8}?_{t,r}^{*}e_{t-r} 
converges in L2 and, therefore, that it is bounded in L2

Maybe this out

for every t. The limit of ?_{t,k} as k tends to infinity is zero for all t: lim_{k?8}?_{t,k}=0 ? t is 
a necessary but not sufficient condition for ?_{r=0}^{8}(?_{t,r}^{*})²s_{t-r}²<M_{u}<8 ? t.

\end{comment}

Assumption A$.$1\ is a sufficient condition for the model in eq. (\ref%
{TVAR(P)}) to admit a second-order MA($\infty $) representation. A necessary
but not sufficient condition for $\sum_{i=0}^{k}\xi _{t,i}\phi _{0}(t-i)$ to
converge is $\lim_{k\rightarrow \infty }[\xi _{t,k}\phi _{0}(t-k)]=0$ $%
\forall $ $t$. A sufficient condition for this limit to be zero is: $%
\lim_{k\rightarrow \infty }\xi _{t,k}=0$ and $\phi _{0}(t-k)$ is bounded.

Another immediate consequence of Theorem \ref{TheoGenSol}\ is the following
Proposition, where we state an expression for the first unconditional moment
of $y_{t}$.

\begin{proposition}
\label{ProWoldRepr}Let Assumption A.1 hold. Then for the TV-AR($2$) model we
have: 
\begin{equation}
y_{t}=\lim_{k\rightarrow \infty }y_{t,k}^{gen}\overset{L_{2}}{=}%
\lim_{k\rightarrow \infty }y_{t,k}^{par}\overset{L_{2}}{=}\Xi _{t,\infty
}(B)[\phi _{0}(t)+\varepsilon _{t}]=\sum_{i=0}^{\infty }\xi _{t,i}B^{i}[\phi
_{0}(t)+\varepsilon _{t}],  \label{TVAR(P)IMA}
\end{equation}%
is a unique solution of the TV-AR(2) model in eq. (\ref{TVAR(P)}). The above
expression states that $\{y_{t,k}^{par},t\in \mathbb{Z}\}$ (defined in eq. (%
\ref{TVAR(p)SOL})) $L_{2}$ converges as $k\rightarrow \infty $ if and only
if $\sum_{i=0}^{k}\xi _{t,i}\phi _{0}(t-i)$ converges and $\sum_{i=0}^{k}\xi
_{t,i}\varepsilon _{t-i}$ converges a.s., and thus under assumption A.1 $%
\lim_{k\rightarrow \infty }y_{t,k}^{gen}\overset{L_{2}}{=}\lim_{n\rightarrow
\infty }y_{t,k}^{par}$ satisfies eq. (\ref{TVAR(P)}). \newline
In other words $\lim_{k\rightarrow \infty }y_{t,k}^{gen}$ is decomposed into
a non random part and a zero mean random part. In particular, 
\begin{comment}  

Mention Peiris

\end{comment}%
\begin{equation}
\mathbb{E}(y_{t})=\lim_{k\rightarrow \infty }\mathbb{E}(y_{t}\left \vert
\tciFourier _{\tau _{k}}\right. )=\Xi _{t,\infty }(B)\phi
_{0}(t)=\sum_{i=0}^{\infty }\xi _{t,i}B^{i}\phi _{0}(t),  \label{E-TVAR(P)}
\end{equation}%
is the non random part of $y_{t}$ and it is an infinite sum of the
periodical drifts where the time varying coefficients are expressed as
determinants of continuant matrices (the $\xi ${\footnotesize s}), while $%
\lim_{k\rightarrow \infty }\mathbb{FE}(y_{t}\left \vert \tciFourier _{\tau
_{k}}\right. )=\sum_{i=0}^{\infty }\xi _{t,i}\varepsilon _{t-i}$ is the zero
mean random part%
\begin{comment}  

The proof follows immediately from eq. (<ref>TVAR(p)SOL</ref>).

\end{comment}. Therefore the $\xi _{t,i}$\ as defined in eq. (\ref{KSIAR(2)}%
) are the Green functions associated with the second order time varying AR
polynomial: $\Phi _{t}(B)=1-\phi _{1}(t)B-\phi _{2}(t)B^{2}$.
\end{proposition}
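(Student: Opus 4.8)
The plan is to build the candidate solution $y_t^{\ast}=\sum_{i=0}^{\infty}\xi_{t,i}B^{i}[\phi_0(t)+\varepsilon_t]$, show it converges in $L_2$, verify that it satisfies (\ref{TVAR(P)}), and then argue uniqueness so that it coincides with $y_t$ and the homogeneous boundary part of Theorem \ref{TheoGenSol} dies. Throughout I use that, with the free constants set to the true values $y_{t-k},y_{t-k-1}$, Theorem \ref{TheoGenSol} is an identity $y_{t,k}^{gen}=y_t$ for every finite $k$ (as recorded for $k=0,1$ right after the theorem); hence $\lim_{k\to\infty}y_{t,k}^{gen}=y_t$ trivially, and the real content is the convergence $y_{t,k}^{par}\overset{L_2}{\to}y_t$, equivalently $y_{t,k}^{hom}\overset{L_2}{\to}0$.

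First I would establish $L_2$-convergence of $y_{t,k}^{par}=\sum_{i=0}^{k-1}\xi_{t,i}\phi_0(t-i)+\sum_{i=0}^{k-1}\xi_{t,i}\varepsilon_{t-i}$. The deterministic sum converges by the first half of Assumption A.1. For the stochastic sum, the summands $\xi_{t,i}\varepsilon_{t-i}$ are zero-mean and, because the $\varepsilon$'s are serially uncorrelated, mutually orthogonal in $H$; hence the squared $L_2$-norm of a tail equals $\sum_{i}\xi_{t,i}^2\sigma_{t-i}^2\le\sum_{i}\sup_t(\xi_{t,i}^2\sigma_{t-i}^2)<M_u$ by the second half of A.1, so the partial sums are $L_2$-Cauchy and converge. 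Denote the limit by $y_t^{\ast}$.

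Next comes the algebraic heart: a three-term recurrence for the continuant determinants. Expanding $\xi_{t,k}=\det(\mathbf{\Phi}_{t,k})$ by cofactors along the last row of the tridiagonal matrix (\ref{FAIMAR(p)X}) yields $\xi_{t,k}=\phi_1(t)\xi_{t-1,k-1}+\phi_2(t)\xi_{t-2,k-2}$ for $k\ge 2$, with $\xi_{t,0}=1$, $\xi_{t,1}=\phi_1(t)$, $\xi_{t,-1}=0$ (a quick check: both routes give $\xi_{t,2}=\phi_1(t)\phi_1(t-1)+\phi_2(t)$). Substituting $y_t^{\ast}$ into $\Phi_t(B)y_t^{\ast}=y_t^{\ast}-\phi_1(t)y_{t-1}^{\ast}-\phi_2(t)y_{t-2}^{\ast}$ and reindexing each series to a common lag $j$ (legitimate by the $L_2$-convergence just shown), the coefficient of lag $j\ge 1$ is exactly $\xi_{t,j}-\phi_1(t)\xi_{t-1,j-1}-\phi_2(t)\xi_{t-2,j-2}=0$ by the recurrence, while the lag-$0$ coefficient is $\xi_{t,0}=1$; hence $\Phi_t(B)y_t^{\ast}=\phi_0(t)+\varepsilon_t$, i.e. $y_t^{\ast}$ solves (\ref{TVAR(P)}). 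This same computation identifies the $\xi_{t,i}$ as the Green functions of $\Phi_t(B)$.

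Finally I would settle uniqueness, which I expect to be the main obstacle. From A.1 and $\sigma_t^2>M_l>0$ one gets $M_l\sum_i\sup_t\xi_{t,i}^2\le\sum_i\sup_t(\xi_{t,i}^2\sigma_{t-i}^2)<M_u$, so $\sum_i\sup_t\xi_{t,i}^2<\infty$ and thus $\xi_{t,k}\to 0$ uniformly in $t$. If $z_t$ denotes the difference of two second-order solutions, it satisfies the homogeneous equation, so the homogeneous part of Theorem \ref{TheoGenSol} gives $z_t=\xi_{t,k}z_{t-k}+\phi_2(t-k+1)\xi_{t,k-1}z_{t-k-1}$; taking $L_2$-norms and using $\xi_{t,k}\to 0$, boundedness of $\phi_2$, and an a priori uniform bound $\sup_{\tau}\Vert z_{\tau}\Vert_{L_2}<\infty$, each term vanishes, forcing $z_t=0$. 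Hence $y_t=y_t^{\ast}$, which gives $y_{t,k}^{par}\overset{L_2}{\to}y_t$ and $y_{t,k}^{hom}\overset{L_2}{\to}0$, establishing the chain in (\ref{TVAR(P)IMA}). The decomposition then follows since $L_2$-convergence forces convergence of means: $\mathbb{E}(y_t)=\sum_i\xi_{t,i}\phi_0(t-i)=\Xi_{t,\infty}(B)\phi_0(t)$ because $\mathbb{E}\varepsilon_{t-i}=0$, leaving $\sum_i\xi_{t,i}\varepsilon_{t-i}$ as the zero-mean random part; this is consistent with letting $k\to\infty$ in the predictor (\ref{CE-TVAR(P)}) of Proposition \ref{ProOptPred}, whose two boundary terms vanish by $\xi_{t,k}\to 0$. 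The delicate points throughout are extracting the decay of $\xi_{t,k}$ from A.1 and invoking the $L_2$-boundedness of the process to annihilate the initial-condition terms.
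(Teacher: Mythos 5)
Your proposal is correct, but it does substantially more than the paper, whose entire argument for this proposition is (in a suppressed comment) that it ``follows immediately from eq.\ (\ref{TVAR(p)SOL})'', i.e.\ from letting $k\rightarrow\infty$ in Theorem \ref{TheoGenSol}. You share that starting point, but you add three things the paper leaves implicit. First, the orthogonality/Cauchy argument for the $L_{2}$-convergence of $\sum_{i}\xi_{t,i}\varepsilon_{t-i}$ under Assumption A.1 -- this is exactly the mechanism the paper is relying on but never writes down. Second, the direct verification that the limit $y_{t}^{\ast}$ satisfies eq.\ (\ref{TVAR(P)}) via the continuant recurrence $\xi_{t,k}=\phi_{1}(t)\xi_{t-1,k-1}+\phi_{2}(t)\xi_{t-2,k-2}$; this recurrence is proved in Appendix A (Proposition \ref{PropAlsecondOrder}) for a different purpose, and your reuse of it to identify the $\xi_{t,i}$ as Green functions of $\Phi_{t}(B)$ is a genuine strengthening -- the paper asserts the Green-function interpretation without verification. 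Third, your derivation of $\sup_{t}\xi_{t,k}^{2}\rightarrow 0$ from A.1 combined with the lower bound $\sigma_{t}^{2}>M_{l}$ is a nice observation that the paper never makes, and it is what actually kills the homogeneous part $y_{t,k}^{hom}$ in the limit.

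The one point to flag is that your uniqueness step (and equally the claim $y_{t,k}^{hom}\overset{L_{2}}{\rightarrow}0$) needs two hypotheses that appear nowhere in A.1: boundedness of $\phi_{2}(\cdot)$ and an a priori uniform bound $\sup_{\tau}\Vert y_{\tau}\Vert_{L_{2}}<\infty$ on the solution class. You are right to call these the delicate points; they are genuinely needed (without them $\xi_{t,k}y_{t-k}$ need not vanish even though $\xi_{t,k}\rightarrow 0$), and the paper's statement that (\ref{TVAR(P)IMA}) ``is a unique solution'' is only defensible within the class of uniformly $L_{2}$-bounded processes. Since you state these as explicit hypotheses rather than silently assuming them, I would not count this as a gap in your argument, but rather as a gap in the paper's that your write-up exposes.
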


\subsection{Second Moments}

In this subsection we state as a Proposition the result for the second
moment structure.

\begin{proposition}
\label{ProSecMom}Let Assumption A.1 hold. Then the second unconditional
moment of $y_{t}$ exists and it is given by%
\begin{equation}
\mathbb{E}(y_{t}^{2})=[\mathbb{E}(y_{t})]^{2}+\Xi _{t,\infty
}^{(2)}(B)\sigma _{t}^{2}=[\mathbb{E}(y_{t})]^{2}+\sum_{i=0}^{\infty }\xi
_{t,i}^{2}B^{i}\sigma _{t}^{2}.  \label{Var-TVAR(P)}
\end{equation}%
That is, the time varying variance of $y_{t}$ is an infinite sum of the time
varying variances of the errors with time varying coefficients (the squared
values of the $\xi ${\footnotesize s}). \newline
In addition, the time varying autocovariance function $\gamma _{t,k}$ is
given by 
\begin{eqnarray}
\gamma _{t,k} &=&\mathbb{C}ov(y_{t},y_{\tau _{k}})=\sum_{i=0}^{\infty }\xi
_{t,k+i}\xi _{\tau _{k},i}\sigma _{\tau _{k}-i}^{2}=\xi _{t,k}\mathbb{V}%
ar(y_{\tau _{k}})+  \label{Covar-TVAR(P)} \\
&&\phi _{2}(\tau _{k}+1)\xi _{t,k-1}\mathbb{C}ov(y_{\tau _{k}},y_{\tau
_{k}-1}),  \notag
\end{eqnarray}%
where the second equality follows from the MA($\infty $) representation of $%
y_{t}$ in eq. (\ref{TVAR(P)IMA}) and the third one from eq. (\ref{TVAR(p)SOL}%
) in Theorem \ref{TheoGenSol}. For any fixed $t$, $\lim_{k\rightarrow \infty
}\gamma _{t,k}=0$ when $\lim_{k\rightarrow \infty }\xi _{t,k}=0$ $\forall $ $%
t$. Finally, recall that for the PAR and ABAR models the $\xi $%
{\footnotesize s} are given either in eqs. (\ref{BlockToeplitz}) and (\ref%
{KSIABAR2}) respectively, or in Propositions (\ref{ProPAR(2;l)KSI}) and (\ref%
{ProABAR(2,n)KSI}) respectively.
\end{proposition}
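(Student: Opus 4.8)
The plan is to read both the variance and the autocovariance directly off the two representations already in hand — the MA$(\infty)$ form of Proposition \ref{ProWoldRepr} and the finite decomposition of Theorem \ref{TheoGenSol} — and then to supply the analytic justification (convergence and the legitimacy of termwise expectation) from the two clauses of Assumption A.1 together with the martingale--difference structure of $\{\varepsilon_t\}$. The condition $\mathbb{E}[\varepsilon_t\mid \tciFourier_{t-1}]=0$ makes the errors orthogonal, $\mathbb{E}[\varepsilon_s\varepsilon_u]=0$ for $s\neq u$, and this is the fact that collapses every double sum.

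First I would treat the variance. By Proposition \ref{ProWoldRepr} the centered process is $y_t-\mathbb{E}(y_t)=\sum_{i=0}^{\infty}\xi_{t,i}\varepsilon_{t-i}$. Writing $S_K=\sum_{i=0}^{K}\xi_{t,i}\varepsilon_{t-i}$, orthogonality gives $\mathbb{E}[(S_{K'}-S_K)^2]=\sum_{i=K+1}^{K'}\xi_{t,i}^2\sigma_{t-i}^2$ for $K'>K$, which tends to $0$ because A.1 supplies $\sum_{i=0}^{\infty}\sup_t(\xi_{t,i}^2\sigma_{t-i}^2)<M_u<\infty$. Hence $S_K$ is Cauchy in $L_2$, its limit is the centered process, and $\mathbb{E}[(y_t-\mathbb{E}(y_t))^2]=\lim_K\mathbb{E}(S_K^2)=\sum_{i=0}^{\infty}\xi_{t,i}^2\sigma_{t-i}^2$. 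Writing this as $\Xi_{t,\infty}^{(2)}(B)\sigma_t^2$ and adding $[\mathbb{E}(y_t)]^2$ yields eq. (\ref{Var-TVAR(P)}) and, in particular, the existence of the second moment.

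For the autocovariance I would establish the two stated equalities by two different routes. For the series form (second equality) I apply the same $L_2$-limit to both $y_t-\mathbb{E}(y_t)=\sum_{i}\xi_{t,i}\varepsilon_{t-i}$ and $y_{\tau_k}-\mathbb{E}(y_{\tau_k})=\sum_{j}\xi_{\tau_k,j}\varepsilon_{\tau_k-j}$; multiplying, taking expectations termwise (legitimate by Cauchy--Schwarz and A.1), and invoking orthogonality, only the terms with $t-i=\tau_k-j$, i.e. $i=k+j$, survive, giving $\gamma_{t,k}=\sum_{i=0}^{\infty}\xi_{t,k+i}\xi_{\tau_k,i}\sigma_{\tau_k-i}^2$. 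For the recursive form (third equality) I instead use the finite decomposition of Theorem \ref{TheoGenSol}, namely $y_t=\xi_{t,k}y_{t-k}+\phi_2(t-k+1)\xi_{t,k-1}y_{t-k-1}+\sum_{i=0}^{k-1}\xi_{t,i}\phi_0(t-i)+\sum_{i=0}^{k-1}\xi_{t,i}\varepsilon_{t-i}$. The drift sum is deterministic, and each error $\varepsilon_{t-i}$ with $0\le i\le k-1$ is dated strictly after $\tau_k=t-k$, so by the martingale--difference property (conditioning on $\tciFourier_{t-i-1}\supseteq \tciFourier_{\tau_k}$) it is uncorrelated with the $\tciFourier_{\tau_k}$--measurable variable $y_{\tau_k}$; therefore only the homogeneous part contributes and $\gamma_{t,k}=\xi_{t,k}\mathbb{V}ar(y_{\tau_k})+\phi_2(\tau_k+1)\xi_{t,k-1}\mathbb{C}ov(y_{\tau_k},y_{\tau_k-1})$, as claimed.

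Finally, the limit statement follows cleanly from the recursive form. Assumption A.1 makes the variances uniformly bounded, $\mathbb{V}ar(y_{\tau_k})=\sum_i\xi_{\tau_k,i}^2\sigma_{\tau_k-i}^2\le M_u$, and by Cauchy--Schwarz the lag-one covariance is bounded by $M_u$ as well; with $\phi_2(\cdot)$ bounded, $\xi_{t,k}\to0$ (hence also $\xi_{t,k-1}\to0$) forces $\gamma_{t,k}\to0$ for each fixed $t$. I expect the main obstacle to be not the algebra but the analytic bookkeeping: establishing the $L_2$ convergence of the infinite sums and justifying the termwise expectations uniformly, all pinned precisely to the two clauses of A.1, together with the careful use of the filtration so that the ``future'' errors appearing in the Theorem \ref{TheoGenSol} decomposition are genuinely orthogonal to $y_{\tau_k}$.
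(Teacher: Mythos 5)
Your proposal is correct and follows essentially the same route the paper indicates: the variance and the series form of $\gamma_{t,k}$ from the MA($\infty$) representation of Proposition \ref{ProWoldRepr} together with the orthogonality of the martingale differences and Assumption A.1, and the recursive form of $\gamma_{t,k}$ from the decomposition in Theorem \ref{TheoGenSol} after noting that the errors dated after $\tau_k$ are uncorrelated with $y_{\tau_k}$. You merely supply the $L_2$ bookkeeping that the paper leaves implicit, so no further comment is needed.
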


Although it may be difficult to compute the covariance structure of $%
\{y_{t}\}$ explicitly, for numerical work, one can always calculate it by
computing the Green functions (that is, the continuant determinants $\xi $%
{\footnotesize s}) with eqs. (\ref{FAIMAR(p)X}) and (\ref{KSIAR(2)}) and
summing these with eq. (\ref{CE-TVAR(P)}).

\begin{comment}  

Change numbering of Corollaries and Propositions.
\end{comment}

\begin{comment}Other GARCH models: Spline GARCH, Baillie's-Morana's, Asymmetric GARCH, GARCH in-mean
We omit to present results for these models due to space considerations.
\end{comment}

\begin{comment}Non-negativity conditions for GARCH models; Results for squared errors as well (DONE)
\end{comment}

\section{Conclusions}

We have provided the general solutions to low order TV-AR models in terms of
their homogeneous and particular parts. \ Our first step was to find the
fundamental set of solutions by computing the determinants of the matrix of
coefficients associated with the infinite linear system that represents the
difference equation.

The framework developed in Section 2, proved itself to be a general time
varying theory, encompassing a number of seemingly unrelated models,
discussed in Sections 3 and 4. \ We have identified common properties
(throughout the paper and in particular in Section 5), which are basic to
each of the particular application.

We believe that time varying models should take center stage in the time
series literature; this is why we have labored to develop a theory with
rigorous foundations that can encompass a variety of dynamic systems, i.e.,
periodic and cyclical processes, and AR models which contain multiple
structural breaks. \ Work that remains to be done by us and fellow
researchers is on estimation and testing (for one application on this front
see the paper by Karanasos et al., 2013) to demonstrate the usefulness of
time varying models. \ In the long run, a sound mathematical theory has to
be cointegrated with its applicability.

\bigskip

REFERENCES

\bigskip

Abdrabbo, N. A. and Priestley, M. B. (1967) On the prediction of
non-stationary processes. \textit{Journal of the Royal Statistical Society},
Series B \textbf{29}, 570-85.

Adams, G. J. and Goodwin, G. C. (1995) Parameter estimation for periodic
ARMA models. \textit{Journal of Time Series Analysis} \textbf{16}, 127-45.

Anderson, P. L., Meerschaert, M. M. and Zhang, K. (2013) Forecasting with
prediction intervals for periodic autoregressive moving average models. 
\textit{Journal of Time Series Analysis} \textbf{34}, 187-93.

Anderson, P. L. and Vecchia, A. V. (1993) Asymptotic results for periodic
autoregressive moving-average processes. \textit{Journal of Time Series
Analysis} \textbf{14}, 1-18.

Bentarzi, M. and Hallin, M. (1994) On the invertibility of periodic
moving-average models. \textit{Journal of Time Series Analysis} \textbf{15},
263-68.

Birchenhall, C. R., Bladen-Hovell, R. C., Chui, A. P. L., Osborn, D. R. and
Smith, J. P. (1989) A seasonal model of consumption. \textit{The} \textit{%
Economic Journal} \textbf{99}, 837-43.

Bollerslev, T. and Ghysels, E. (1996) Periodic autoregressive conditional
heteroscedasticity. \textit{Journal of Business \& Economic Statistics} 
\textbf{14}, 139-51.

Cipra, T. and Tlust\'{y}, P. (1987) Estimation in multiple
autoregressive-moving average models using periodicity. \textit{Journal of
Time Series Analysis} \textbf{8}, 293-300.

del Barrio Castro, T. and Osborn, D. R. (2008) Testing for seasonal unit
roots in periodic integrated autoregressive processes. \textit{Econometric
Theory} \textbf{24}, 1093--129

del Barrio Castro, T. and Osborn, D. R. (2012) Non-parametric testing for
seasonally and periodically integrated processes. \textit{Journal of Time
Series Analysis} \textbf{33}, 424-37.

Franses, P. H. (1994) A multivariate approach to modeling univariate
seasonal time series. \textit{Journal of Econometrics} \textbf{63}, 133--51.

Franses, P. H. (1996a) Multi-step forecast error variances for periodically
integrated time series. \textit{Journal of Forecasting} \textbf{15}, 83-95.

Franses, P. H. (1996b) Periodicity and Stochastic Trends in Economic Time
Series. \textit{Oxford University Press}.

Franses, P. H. and Paap, R. (2004) Periodic Time Series models. Oxford. 
\textit{Oxford University Press.}

Franses, P. H. and Paap, R. (2005) Forecasting with periodic autoregressive
time-series models, in A Companion to Economic Forecasting, Eds. Clements,
M. P. and Hendry, D. F. Oxford. \textit{Wiley}-\textit{Blackwell}, pp.
432-52.

Ghysels, E. and Osborn, D. R. (2001) The Econometric Analysis of Seasonal
Time Series. \textit{Cambridge University Press}.

Gladyshev, E. G. (1961) On periodically correlated random sequences. \textit{%
Soviet Mathematics }\textbf{2}, 385-88.

Granger, C. W. J. (2007) Forecasting - looking back and forward: Paper to
celebrate the 50th anniversary of the econometrics institute at the Erasmus
University, Rotterdam. \textit{Journal of Econometrics} \textbf{138}, 3-13.

Granger, C. W. J. (2008) Non-linear models: Where do we go next - time
varying parameter models? \textit{Studies in Nonlinear Dynamics and
Econometrics} \textbf{12}, 1-9.

Hurd, H. L. and Miamee, A. (2007) Periodically Correlated Random Sequences:
Spectral Theory and Practice. Hoboken, New Jersey. \textit{Wiley-Blackwell}.

Karanasos, M. (2001) Prediction in ARMA models with GARCH in mean effects. 
\textit{Journal of Time Series Analysis} \textbf{22}, 555-76.

Karanasos, M., Paraskevopoulos, A. G. and Dafnos, S. (2013) A univariate
time varying analysis of periodic ARMA processes. \textit{Unpublished Paper}.

Karanasos, M., Paraskevopoulos, A. G., Menla Ali, F., Karoglou, M. and
Yfanti, S. (2013) Modelling returns and volatilities during financial
crises: a time varying coefficient approach. \textit{Unpublished Paper}.

Lund, R. and Basawa, I. V. (2000) Recursive prediction and likelihood
evaluation for periodic ARMA models. \textit{Journal of Time Series Analysis}
\textbf{21}, 75-93.

Lund, R., Shao, Q. and Basawa, I. (2006) Parsimonious periodic time series
modeling. \  \textit{Australian \& New Zealand Journal of Statistics} \textbf{%
48}, 33-47.

McLeod, A. I. (1994) Diagnostic checking of periodic autoregression models
with application. \textit{Journal of Time Series Analysis} \textbf{15},
221-33.

Osborn, D. R. (1988) Seasonality and habit persistence in a life cycle model
of consumption. \textit{Journal of Applied Econometrics }\textbf{3}, 255--66.

Osborn, D. R. (1991) The implications of periodically varying coefficients
for seasonal time-series processes. \textit{Journal of Econometrics} \textbf{%
48}, 373--84.

Osborn, D. R. and Smith, J. P. (1989) The performance of periodic
autoregressive models in forecasting seasonal U.K. consumption. \textit{%
Journal of Business \& Economic Statistics} \textbf{7}, 117-27.

Pagano, M. (1978) On periodic and multiple autoregressions. \  \textit{The} 
\textit{Annals of Statistics} \textbf{6}, 1310-317.

Paraskevopoulos, A. G. (2012) The Infinite Gauss-Jordan elimination on
row-finite $\omega \times \omega $ matrices. arXiv: 1201.2950.

Paraskevopoulos, A. G. and Karanasos, M. (2013) Closed form solutions for
linear difference equations with time dependent coefficients. \textit{%
Unpublished paper}.

Paraskevopoulos, A. G., Karanasos, M. and Dafnos, S. (2013) A unified theory
for time varying models: foundations with applications in the presence of
breaks and heteroskedasticity (and some results on companion and Hessenberg
matrices). \textit{Unpublished Paper}.

Pesaran, M. H., Pettenuzzo, D. and Timmermann, A. (2006) Forecasting time
series subject to multiple structural breaks. \textit{Review of Economic
Studies} 73, 1057-084.

Rao, S. T. (1970) The fitting of non-stationary time-series models with
time-dependent parameters. \textit{Journal of the Royal Statistical Society,}
Series B \textbf{32}, 312-22.

Shao, Q. (2008) Robust estimation for periodic autoregressive time series. 
\textit{Journal of Time Series Analysis} \textbf{29}, 251-63.

Sydsaeter, K., Hammond, P. J., Seierstad, A. and Strom, A. (2008) Further
Mathematics for Economic Analysis. New York. \textit{Prentice Hall, }2nd
edition.

Taylor, A. M. R. (2002) Regression-based unit root tests with recursive mean
adjustment for seasonal and nonseasonal time series. \textit{Journal of
Business \& Economic Statistics} \textbf{20}, 269--81.

Taylor, A. M. R. (2003) On the asymptotic properties of some seasonal unit
root tests. \textit{Econometric Theory} \textbf{19}, 311--21.

Taylor, A. M. R. (2005) Variance ratio tests of the seasonal unit root
hypothesis. \textit{Journal of Econometrics} \textbf{124}, 33--54.

Tesfaye Y. G., Anderson, P. L. and Meerschaert, M. M. (2011) Asymptotic
results for Fourier-PARMA time series. \textit{Journal of Time Series
Analysis} \textbf{32}, 157--74.

Tiao, G. C. and Grupe, M. R. (1980) Hidden periodic autoregressive-moving
average models in time series data. \textit{Biometrika} \textbf{67}, 365--73.

Tiao, G. C. and Guttman, I. (1980) Forecasting contemporal aggregates of
multiple time series. \textit{Journal of Econometrics} \textbf{12}, 219--30.

Whittle, P. (1965) Recursive relations for predictors of non-stationary
processes. \textit{Journal of the Royal Statistical Society,} Series B 
\textbf{27}, 523-32.

\appendix

\section{APPENDIX}

In this appendix we prove Theorem 1. Before proceeding with the main body of
the proof, we present two essential tools for carrying it out.

\textbf{The Infinite Gaussian Elimination}. Following Paraskevopoulos
(2012), we apply the infinite Gaussian elimination algorithm implemented
under a rightmost pivot strategy to the coefficient matrix $%
\mbox{\boldmath$\Phi$}$ of (\ref{PHIMatrix}). The process is briefly
described below. \newline
Call $\mathbf{h}^{(1)}=\mathbf{H}^{(1)}=(-\phi _{2}(\tau _{k}+1),-\phi
_{1}(\tau _{k}+1),1,0,...)$ the opposite-sign first row of $\mathbf{\Phi }$.
Insert the second row of $\mathbf{\Phi }$ below $\mathbf{H}^{(1)}$ to build
the matrix $\mathbf{B}^{(2)}$:%
\begin{equation*}
\mathbf{B}^{(2)}=\left( \! \!%
\begin{array}{ccccc}
-\phi _{2}(\tau _{k}+1) & -\phi _{1}(\tau _{k}+1) & 1 & 0 & \ldots \\ 
0 & \phi _{2}(\tau _{k}+2) & \phi _{1}(\tau _{k}+2) & -1 & \ldots%
\end{array}%
\right) .
\end{equation*}%
Use as pivot the rightmost one of $\mathbf{H}^{(1)}$ to clear the element $%
\phi _{1}(\tau _{k}+2)$ in the second row of $\mathbf{B}^{(2)}$. After
normalization it yields the matrix: 
\begin{equation*}
\mathbf{H}^{(2)}=\left( \! \!%
\begin{array}{ccccc}
-\phi _{2}(\tau _{k}+1) & -\phi _{1}(\tau _{k}+1)\! \! & 1 & 0 & \ldots \\ 
-\phi _{2}(\tau _{k}+1)\phi _{1}(\tau _{k}+2) & -\phi _{2}(\tau _{k}+2)-\phi
_{1}(\tau _{k}+1)\phi _{1}(\tau _{k}+2) & 0 & 1 & \ldots%
\end{array}%
\right) .
\end{equation*}%
{\normalsize Insert the third row of }$\mathbf{\Phi }${\normalsize \ below $%
\mathbf{H}^{(2)}$ to build the matrix $\mathbf{B}^{(3)}$: 
\begin{equation*}
\left( \! \! \! \!%
\begin{array}{ccccrr}
-\phi _{2}(\tau _{k}\!+\!1)\! \! \! & -\phi _{1}(\tau _{k}\!+\!1)\! \! \! \!
\! & 1 & 0 & 0 & \ldots \\ 
-\phi _{2}(\tau _{k}\!+\!1)\phi _{1}(\tau _{k}\!+\!2)\! \! \! & -\phi
_{2}(\tau _{k}\!+\!2)\!-\! \phi _{1}(\tau _{k}\!+\!1)\phi _{1}(\tau
_{k}\!+\!2) & 0 & 1 & 0 & \ldots \\ 
0\! \! \! & 0 & \phi _{2}(\tau _{k}+3) & \phi _{1}(\tau _{k}+3) & -1 & \ldots%
\end{array}%
\right) .
\end{equation*}%
Use the first two rows of $\mathbf{B}^{(3)}$ as pivot rows and their
rightmost }${\normalsize 1}${\normalsize s as pivot elements to clear the
entries $\phi _{2}(\tau _{k}+3)$ and $\phi _{1}(\tau _{k}+3)$ of $\mathbf{B}%
^{(3)}$, producing the matrix $\mathbf{H}^{(3)}$: 
\begin{equation*}
\mathbf{H}^{(3)}=\left( 
\begin{array}{ccccccc}
h_{11} & h_{12} & 1 & 0 & 0 & 0 & ... \\ 
h_{21} & h_{22} & 0 & 1 & 0 & 0 & ... \\ 
h_{31} & h_{32} & 0 & 0 & 1 & 0 & ...\vspace{-0.05in}%
\end{array}%
\right) .
\end{equation*}%
where the entries of the first column of }$\mathbf{H}^{(3)}$ {\normalsize %
are given by 
\begin{equation*}
\begin{array}{l}
h_{11}=-\phi _{2}(\tau _{k}+1),\ h_{21}=-\phi _{2}(\tau _{k}+1)\phi
_{1}(\tau _{k}+2), \\ 
h_{31}=-\phi _{2}(\tau _{k}+1)\phi _{1}(\tau _{k}+2)\phi _{1}(\tau
_{k}+3)\!-\! \phi _{2}(\tau _{k}+1)\phi _{2}(\tau _{k}+3),...%
\end{array}%
\end{equation*}%
and the entries of the second column are given by 
\begin{equation*}
\begin{array}{l}
h_{12}=-\phi _{1}(\tau _{k}+1),\ h_{22}=-\phi _{2}(\tau _{k}+2)\!-\! \phi
_{1}(\tau _{k}+1)\phi _{1}(\tau _{k}+2), \\ 
h_{32}=-\phi _{1}(\tau _{k}\! \!+\!1)\phi _{1}(\tau _{k}\! \!+\!2)\phi
_{1}(\tau _{k}\! \!+\!3)\!-\! \phi _{2}(\tau _{k}\! \!+\!2)\phi _{1}(\tau
_{k}\! \!+\!3)\!-\! \phi _{2}(\tau _{k}\! \!+\!3)\phi _{1}(\tau _{k}\!
\!+\!1)%
\end{array}%
.
\end{equation*}%
This process continues ad infinitum, generating an infinite chain of
submatrices}%
\begin{equation*}
{\normalsize \mathbf{H}^{(1)}\sqsubset \mathbf{H}^{(2)}\sqsubset \mathbf{H}%
^{(3)}}\mathbf{\sqsubset \ldots \sqsubset }{\normalsize \mathbf{H}}
\end{equation*}%
whose limit row-finite matrix ${\normalsize \mathbf{H}}$ is {\normalsize the
Hermite Form (HF) of }$\mathbf{\Phi }$. The $i$th row of ${\normalsize 
\mathbf{H}}$ is defined to be the last row of ${\normalsize \mathbf{H}^{(i)}}
$.

\textbf{Two Fundamental Solutions}. {\normalsize The opposite-sign two first
columns of $\mathbf{H}$ augmented at the top by $(1,0)$ and $(0,1)$,
respectively, that is 
\begin{equation*}
\begin{array}{ll}
\mathbf{\xi }_{\tau _{k}}^{(2)}=\! \! \! \! & (1,\  \ 0,\  \  \phi _{2}(\tau
_{k}+1),\  \  \phi _{2}(\tau _{k}+1)\phi _{1}(\tau _{k}+2), \\ 
\! \! \! \! & \phi _{2}(\tau _{k}+1)\phi _{1}(\tau _{k}+2)\phi _{1}(\tau
_{k}+3)\!+\! \phi _{2}(\tau _{k}+1)\phi _{2}(\tau _{k}+3),...)^{\prime }, \\ 
\mathbf{\xi }_{\tau _{k}}^{(1)}=\! \! \! \! & (0,\  \ 1,\  \  \phi _{1}(\tau
_{k}+1),\  \  \phi _{2}(\tau _{k}+2)\!+\! \phi _{1}(\tau _{k}+1)\phi _{1}(\tau
_{k}+2), \\ 
\! \! \! \! & \phi _{1}(\tau _{k}\! \!+\!1)\phi _{1}(\tau _{k}\! \!+\!2)\phi
_{1}(\tau _{k}\! \!+\!3)\!+\! \phi _{2}(\tau _{k}\! \!+\!2)\phi _{1}(\tau
_{k}\! \!+\!3)\!+\! \phi _{2}(\tau _{k}\! \!+\!3)\phi _{1}(\tau _{k}\!
\!+\!1),...)^{\prime }%
\end{array}%
\end{equation*}%
are the two linearly independent solution sequences of the space of
homogeneous solutions of eq. (\ref{Difference(P)}). The linear independence
of $\mathbf{\xi }_{\tau _{k}}^{(1)},\mathbf{\xi }_{\tau _{k}}^{(2)}$ follows
from the fact that they possess the Casoratian: 
\begin{equation*}
\det \left( 
\begin{array}{ll}
1 & 0 \\ 
0 & 1%
\end{array}%
\right) \not=0.
\end{equation*}%
}

{\normalsize We observe that the terms of the sequences }$\mathbf{\xi }$%
{\normalsize $_{\tau _{k}}^{(1)},\mathbf{\xi }_{\tau _{k}}^{(2)}$ are
expansions of the following determinants 
\begin{equation}
\begin{array}{ll}
\mathbf{\xi }_{\tau _{k}}^{(2)}\! \! \! \! & =\left \{ 
\begin{array}{ll}
1 &  \\ 
0 &  \\ 
\phi _{2}(\tau _{k}+1) &  \\ 
\det \left( 
\begin{array}{cc}
\phi _{2}(\tau _{k}+1) & -1 \\ 
0 & \phi _{1}(\tau _{k}+2)%
\end{array}%
\right) & \vspace{0.1in} \\ 
\det \left( 
\begin{array}{ccc}
\phi _{2}(\tau _{k}+1) & -1 & 0 \\ 
0 & \phi _{1}(\tau _{k}+2) & -1 \\ 
0 & \phi _{2}(\tau _{k}+3) & \phi _{1}(\tau _{k}+3)%
\end{array}%
\! \! \right) ,\vspace{-0.05in} &  \\ 
.\vspace{-0.1in} &  \\ 
.\vspace{-0.1in} &  \\ 
. & 
\end{array}%
\right.%
\end{array}
\tag{A.1}
\end{equation}%
\begin{equation}
\begin{array}{ll}
\mathbf{\xi }_{\tau _{k}}^{(1)}\! \! \! \! & =\left \{ 
\begin{array}{ll}
0 &  \\ 
1 &  \\ 
\phi _{1}(\tau _{k}+1) &  \\ 
\det \left( 
\begin{array}{cc}
\phi _{1}(\tau _{k}+1) & -1 \\ 
\phi _{2}(\tau _{k}+2) & \phi _{1}(\tau _{k}+2)%
\end{array}%
\right) & \vspace{0.1in} \\ 
\det \left( 
\begin{array}{ccc}
\phi _{1}(\tau _{k}+1) & -1 & 0 \\ 
\phi _{2}(\tau _{k}+2) & \phi _{1}(\tau _{k}+2) & -1 \\ 
0 & \phi _{2}(\tau _{k}+3) & \phi _{1}(\tau _{k}+3)%
\end{array}%
\! \! \right) .\vspace{-0.05in} &  \\ 
.\vspace{-0.1in} &  \\ 
.\vspace{-0.1in} &  \\ 
. & 
\end{array}%
\right.%
\end{array}
\tag{A.2}
\end{equation}%
The first few terms of the homogeneous solution sequences, as shown above,
suggest that the general terms of }$\mathbf{\xi }${\normalsize $_{\tau
_{k}}^{(1)},\mathbf{\xi }_{\tau _{k}}^{(2)}$ are 
\begin{equation}
\xi _{t,k}^{(m)}=\det (\mathbf{\Phi }_{t,k}^{(m)}),\  \ m=1,2,  \tag{A.3}
\end{equation}%
where $\mathbf{\Phi }_{t,k}^{(1)}=\mathbf{\Phi }_{t,k}$ and} $\xi
_{t,k}^{(1)}=\xi _{t,k}$ (we drop the superscript $1$ for notational
convenience){\normalsize , as introduced in eqs. (\ref{FAIMAR(p)X}) and (\ref%
{KSIAR(2)}), and 
\begin{equation*}
\mathbf{\Phi }_{t,k}^{(2)}=\left( 
\begin{array}{cccccc}
\phi _{2}(\tau _{k}+1) & -1 &  &  &  &  \\ 
& \phi _{1}(\tau _{k}+2) & -1 &  &  &  \\ 
& \phi _{2}(\tau _{k}+3) & \phi _{1}(\tau _{k}+3) & -1 &  &  \\ 
&  & \ddots & \ddots & \ddots &  \\ 
&  &  & \phi _{2}(t-1) & \phi _{1}(t-1) & -1 \\ 
&  &  &  & \phi _{2}(t) & \phi _{1}(t)%
\end{array}%
\right) .
\end{equation*}%
In the following Proposition we use mathematical induction to verify the
above generalization formally. }

\begin{proposition}
{\normalsize \label{PropAlsecondOrder} The general terms of the fundamental
solution sequences }$\mathbf{\xi }${\normalsize $_{\tau _{k}}^{(m)}$, $m=1,2$%
, are given by eq. \emph{(A.3)}, that is 
\begin{equation}
\xi _{t,k}^{(2)}=\det \left( 
\begin{array}{cccccc}
\phi _{2}(\tau _{k}+1) & -1 &  &  &  &  \\ 
& \phi _{1}(\tau _{k}+2) & -1 &  &  &  \\ 
& \phi _{2}(\tau _{k}+3) & \phi _{1}(\tau _{k}+3) & -1 &  &  \\ 
&  & \ddots & \ddots & \ddots &  \\ 
&  &  & \phi _{2}(t-1) & \phi _{1}(t-1) & -1 \\ 
&  &  &  & \phi _{2}(t) & \phi _{1}(t)%
\end{array}%
\right)  \tag{A.4}
\end{equation}%
and 
\begin{equation}
\xi _{t,k}=\det \left( 
\begin{array}{cccccc}
\phi _{1}(\tau _{k}+1) & -1 &  &  &  &  \\ 
\phi _{2}(\tau _{k}+2) & \phi _{1}(\tau _{k}+2) & -1 &  &  &  \\ 
& \phi _{2}(\tau _{k}+3) & \phi _{1}(\tau _{k}+3) & -1 &  &  \\ 
&  & \ddots & \ddots & \ddots &  \\ 
&  &  & \phi _{2}(t-1) & \phi _{1}(t-1) & -1 \\ 
&  &  &  & \phi _{2}(t) & \phi _{1}(t)%
\end{array}%
\right) .  \tag{A.5}
\end{equation}%
}
\end{proposition}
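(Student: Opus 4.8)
The plan is to prove the two formulas (A.4) and (A.5) simultaneously by induction on the matrix size $k$, the engine being a three-term continuant recurrence satisfied by the tridiagonal determinants that coincides with the homogeneous difference equation obeyed by the fundamental sequences. Writing $\tau_k=t-k$ throughout, I would first record that, by their construction as columns of the Hermite form of $\mathbf{\Phi}$, the sequences $\mathbf{\xi}_{\tau_k}^{(1)},\mathbf{\xi}_{\tau_k}^{(2)}$ are homogeneous solutions of eq. (\ref{Difference(P)}); that is, read along the time index with the base point $\tau_k$ held fixed, their general term $v_\tau$ satisfies $v_\tau=\phi_1(\tau)v_{\tau-1}+\phi_2(\tau)v_{\tau-2}$. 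Since the term sitting at current time $t$ has ``depth'' $k=t-\tau_k$, while the terms at $t-1$ and $t-2$ have depths $k-1$ and $k-2$ with the same base point, this recurrence reads, in the notation of (A.3),
\[
\xi_{t,k}^{(m)}=\phi_1(t)\,\xi_{t-1,k-1}^{(m)}+\phi_2(t)\,\xi_{t-2,k-2}^{(m)},\qquad m=1,2 .
\]
The whole proof then reduces to showing that the determinants $\det(\mathbf{\Phi}_{t,k}^{(m)})$ obey this very recurrence and agree with $\xi_{t,k}^{(m)}$ on two consecutive base values.

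The heart of the argument is therefore the following continuant identity for the tridiagonal matrices (\ref{FAIMAR(p)X}): I would Laplace-expand $\det(\mathbf{\Phi}_{t,k}^{(m)})$ along its last row, whose only nonzero entries are $\phi_2(t)$ in position $(k,k-1)$ and $\phi_1(t)$ in position $(k,k)$. The $(k,k)$ cofactor, with sign $(-1)^{2k}=+1$, deletes the last row and column and leaves exactly $\mathbf{\Phi}_{t-1,k-1}^{(m)}$, contributing $\phi_1(t)\det(\mathbf{\Phi}_{t-1,k-1}^{(m)})$. The $(k,k-1)$ entry carries sign $(-1)^{2k-1}=-1$; its minor has a last column containing a single $-1$ (the superdiagonal entry in position $(k-1,k)$), and a second expansion along that column peels off another factor $-1$ and leaves $\mathbf{\Phi}_{t-2,k-2}^{(m)}$. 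The two minus signs cancel, so this term contributes $+\phi_2(t)\det(\mathbf{\Phi}_{t-2,k-2}^{(m)})$, and adding the two gives precisely the displayed recurrence. The delicate point, and the step I expect to require the most care, is exactly this sign bookkeeping in the double expansion, since it is the $-1$'s on the superdiagonal that convert the naively negative $\phi_2$-contribution into the positive coefficient demanded by the difference equation; I would keep $k$ generic ($k\ge 3$) here so that the surviving minors retain their tridiagonal shape.

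With the recurrence in hand the induction closes routinely. For the base I would use the conventions $\xi_{t,0}=1,\ \xi_{t,-1}=0$ for $m=1$ and, reading off (A.1), $\xi_{t,0}^{(2)}=0,\ \xi_{t,-1}^{(2)}=1$ for $m=2$, and check directly that the $1\times1$ and $2\times2$ determinants displayed in (A.1)--(A.2) match the sequence values produced by the elimination; these two consecutive values then determine every later term through the second-order recurrence. A pleasant economy is that the inductive step is literally the same computation for $m=1$ and $m=2$: the matrices $\mathbf{\Phi}_{t,k}^{(1)}$ and $\mathbf{\Phi}_{t,k}^{(2)}$ differ only in their first column, whereas the last-row expansion never touches it, so one expansion serves both cases and only the base values distinguish them. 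Invoking uniqueness of the solution of a second-order linear recurrence given two consecutive initial conditions, I conclude $\det(\mathbf{\Phi}_{t,k}^{(m)})=\xi_{t,k}^{(m)}$ for all $k$, which is (A.4)--(A.5).
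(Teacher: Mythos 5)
Your proposal is correct and follows essentially the same route as the paper's own proof: a Laplace expansion of the tridiagonal determinant along its last row (with the sign cancellation coming from the superdiagonal $-1$'s) yields the continuant recurrence $\xi_{t,k}^{(m)}=\phi_1(t)\xi_{t-1,k-1}^{(m)}+\phi_2(t)\xi_{t-2,k-2}^{(m)}$, which combined with the directly checked low-order terms closes the induction. Your explicit remark that the $m=1$ and $m=2$ cases share the identical expansion because the matrices differ only in the first column is a slightly cleaner way of packaging what the paper dismisses with ``by analogy,'' but the argument is the same.
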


\begin{proof}
{\normalsize If $t=\tau _{k}+1$ and $t=\tau _{k}+2$ then $\xi _{\tau
_{k}+1,1}$ and $\xi _{\tau _{k}+2,2}$ is the third term and fourth term of
the sequences as directly verified by eq. (A.}$2${\normalsize ). We assume
that $\xi _{t-2,k-2}$ and $\xi _{t-1,k-1}$ are terms of $\mathbf{\xi }_{\tau
_{k}}^{(1)}$. We show that $\xi _{t,k}$ is also a term of $\mathbf{\xi }%
_{\tau _{k}}^{(1)}$. Expanding $\xi _{t,k}$ along the last row and taking
into account that $\mathbf{\Phi }_{t,k}$ is a $k\times k$ matrix, we have: 
\begin{equation*}
\begin{array}{c}
\xi _{t,k}=(-1)^{2k}\phi _{1}(t)\det \left( \! \!%
\begin{array}{cccccc}
\phi _{1}(\tau _{k}+1) & -1 &  &  &  &  \\ 
\phi _{2}(\tau _{k}+2) & \phi _{1}(\tau _{k}+2) & -1 &  &  &  \\ 
& \phi _{2}(\tau _{k}+3) & \phi _{1}(\tau _{k}+3) & -1 &  &  \\ 
&  & \ddots & \ddots & \ddots &  \\ 
&  &  & \phi _{2}(t-2) & \phi _{1}(t-2) & -1 \\ 
&  &  &  & \phi _{2}(t-1) & \phi _{1}(t-1)%
\end{array}%
\right) +\vspace{0.1in} \\ 
(-1)^{2k-1}(-1)\phi _{2}(t)\det \left( \! \! \!%
\begin{array}{cccccc}
\phi _{1}(\tau _{k}+1) & -1 &  &  &  &  \\ 
\phi _{2}(\tau _{k}+2) & \phi _{1}(\tau _{k}+2) & -1 &  &  &  \\ 
& \phi _{2}(\tau _{k}+3) & \phi _{1}(\tau _{k}+3) & -1 &  &  \\ 
&  & \ddots & \ddots & \ddots &  \\ 
&  &  & \phi _{2}(t-3) & \phi _{1}(t-3) & -1 \\ 
&  &  &  & \phi _{2}(t-2) & \phi _{1}(t-2)%
\end{array}%
\! \right) .%
\end{array}%
\end{equation*}%
Using the induction hypothesis, the above result can be written as 
\begin{equation*}
\xi _{t,k}=\phi _{1}(t)\xi _{t-1,k-1}+\phi _{2}(t)\xi _{t-2,k-2},
\end{equation*}%
which shows that $\xi _{t,k}$ is a homogeneous solution of (\ref%
{Difference(P)}). Thus $\xi _{t,k}$ in (A.5) is a term of the solution
sequence and the induction is complete. By analogy, we can show (A.4) and
the proof is complete. }
\end{proof}

{\normalsize The fundamental solution }$\xi _{t,k}$ (respectively $\xi
_{t,k}^{(2)}$) {\normalsize can be obtained by augmenting the core solution
matrix $\mathbf{C}_{t,k}$ (see eq. (\ref{CoreMatrix}) in the main body of
the paper) on the left by a $k\times 1$ column consisting of the first $k$
entries of the second column (respectively of the first column) of }$\mathbf{%
P}$ or $\mathbf{\Phi }${\normalsize . }

\begin{proof}
(\textbf{of Theorem 1}) {\normalsize As a direct consequence of Proposition
1, the general homogeneous solution of eq. (\ref{Difference(P)}) is the
linear combination of the fundamental solutions as given below: 
\begin{equation}
y_{t,k}^{hom}=\xi _{t,k}y_{\tau _{k}}+\xi _{t,k}^{(2)}y_{\tau _{k}-1}. 
\tag{A.6}
\end{equation}%
By expanding $\xi _{t,k}^{(2)}$ along the first column we obtain 
\begin{equation*}
\xi _{t,k}^{(2)}=\phi _{2}(\tau _{k}+1)\xi _{t,k-1}
\end{equation*}%
and therefore (A.6) takes the form 
\begin{equation*}
y_{t,k}^{hom}=\xi _{t,k}y_{\tau _{k}}+\phi _{2}(\tau _{k}+1)\xi
_{t,k-1}y_{\tau _{k}-1},
\end{equation*}%
which coincides with the general homogeneous solution employed in eq. (7). }%
\newline
{\normalsize Next we show that $y_{t,k}^{par}$, employed in eq. (\ref%
{TVAR(p)SOL}), is a particular solution of eq. (\ref{Difference(P)}). Using
the same arguments as in the proof of Proposition \ref{PropAlsecondOrder} we
can show that 
\begin{equation}
y_{t,k}^{par}=\det \left( 
\begin{array}{cccccc}
\phi _{0}(\tau _{k}+1)+\epsilon _{\tau _{k}+1} & -1 &  &  &  &  \\ 
\phi _{0}(\tau _{k}+2)+\epsilon _{\tau _{k}+2} & \phi _{1}(\tau _{k}+2) & -1
&  &  &  \\ 
\phi _{0}(\tau _{k}+3)+\epsilon _{\tau _{k}+3} & \phi _{2}(\tau _{k}+3) & 
\phi _{1}(\tau _{k}+3) & -1 &  &  \\ 
\vdots &  & \ddots & \ddots & \ddots &  \\ 
\phi _{0}(t-1)+\epsilon _{t-1} &  &  & \phi _{2}(t-1) & \phi _{1}(t-1) & -1
\\ 
\phi _{0}(t)+\epsilon _{t} &  &  &  & \phi _{2}(t) & \phi _{1}(t)%
\end{array}%
\right) ,  \tag{A.7}
\end{equation}%
is the solution of the initial value problem determined by eq. (\ref%
{Difference(P)}) subject to the initial values $y_{-1}=y_{0}=0$. This is the
determinant of the core solution matrix $\mathrm{\mathbf{C}}_{t,k}$
augmented on the left by a $k\times 1$ column consisting of the opposite
sign first $k$ entries of the right-hand side sequence of eq. (\ref%
{Difference(P)}). }\newline
{\normalsize Now expanding the determinant in eq. (A.7) along the first
column we obtain $y_{t,k}^{par}$ in terms of $\xi _{t,i}$ and $\phi
_{0}(t-i),\epsilon _{t-i}$ for $i=0,1,...,k-1$, as used in eq. (\ref%
{TVAR(p)SOL}). Therefore the general solution in eq. (\ref{TVAR(p)SOL}), as
the sum of the general homogeneous solution plus a particular solution, has
been established. This completes the proof of Theorem 1. }
\end{proof}

\begin{comment} 

To mention that our results can be used in order to derive necessry and sufficient conditions
for  the strict stationy solution of the TV-ARMA(p,q) to exist.

\end{comment}

\begin{comment} 

f_{m}(t), _{l}(t), e_{t}, t Z; m=0,…,p, l=1,…,q is a strictly
stationary, ergodic process

\end{comment}

\section{APPENDIX\label{SecAppendABAR}}

In this Appendix we will make use of the block Toeplitz matrix in eq. (\ref%
{Block Toeplitz1}) to obtain an explicit formula of $\xi _{t,nl}$ in which
we decompose it into tridiagonal determinants, $\xi _{t,l}$. To prepare the
reader, before we present the main result we consider the case where $n=2$,
that is we go from time $t$ back to time $t-2l$. The tridiagonal determinant 
$\xi _{t,2l}$\  \ can be written as the sum of two terms 
\begin{eqnarray}
\xi _{t,2l} &=&\left \vert 
\begin{array}{ll}
\mathbf{\Phi }_{t,l} & \overline{\mathbf{0}} \\ 
\widetilde{\mathbf{0}}_{t} & \mathbf{\Phi }_{t,l}%
\end{array}%
\right \vert =  \TCItag{B.1} \\
&=&\xi _{t,l}^{2}+\phi _{2}(\tau _{l}+1)\xi _{t,l-1}\xi _{t-1,l-1},  \notag
\end{eqnarray}%
where each term is the product of two continuant (or tridiagonal)
determinants.

Next let $i_{j}\in \{0,1\}$, $j=1,\ldots ,n-1$, and define%
\begin{equation}
\varphi _{2,j}=\left \{ 
\begin{array}{ccc}
1 & \text{if} & i_{j}=0, \\ 
\phi _{2}(\tau _{jl}+1) & \text{if} & i_{j}=1.%
\end{array}%
\right.  \tag{B.2}
\end{equation}

\begin{comment}  

text is needed to connect.
\end{comment}

\begin{proposition}
\label{ProPAR(2;l)KSI}For the PAR($2;l$) process in eq. (\ref{PAR(2)}), $\xi
_{t,nl}$ is the determinant of $\mathbf{\Phi }_{t,nl}$ in eq. (\ref%
{BlockToeplitz}), and therefore can be written as%
\begin{equation}
\xi _{t,nl}=\dsum \limits_{i_{1}=0}^{1}\cdots \dsum
\limits_{i_{n-1}=0}^{1}\{ \xi _{t,l-i_{1}}(\dprod \limits_{T=2}^{n-1}\varphi
_{2,T-1}\xi _{t-i_{T-1},l-i_{T}-i_{T-1}})\varphi _{2,n-1}\xi
_{t-i_{n-1},l-i_{n-1}}\},  \tag{B.3}
\end{equation}%
where $\dsum \cdots \dsum $ stands for a multiple but finite summation, and
recall that $\xi _{t,l}=\left \vert \mathbf{\Phi }_{t,l}\right \vert $ and $%
\mathbf{\Phi }_{t,l}$ is given by eq. (\ref{FAIM1}).
\end{proposition}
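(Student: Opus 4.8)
The plan is to prove (B.3) by induction on $n$, exploiting the fact that, read at the level of individual entries, $\mathbf{\Phi}_{t,nl}$ is nothing but the continuant (tridiagonal) matrix of eq. (\ref{FAIMAR(p)X}) with $k=nl$: every superdiagonal entry equals $-1$, every subdiagonal entry is some $\phi_2(\cdot)$, and the block partition of eq. (\ref{Block Toeplitz1}) merely records that the subdiagonal entries sitting at the block boundaries (global positions $(jl{+}1,jl)$, $j=1,\dots,n-1$) all equal $\phi_2(\tau_{jl}+1)$, which by $l$-periodicity is the common value $\phi_2(\tau_l+1)$. The base case $n=2$ is exactly eq. (B.1), and the seed values $\xi_{t,l},\xi_{t,l-1}$ for $n=1$ are immediate. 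The engine of the induction is the elementary \emph{cut} identity for a continuant: writing $D_{[a,b]}$ for the determinant of the contiguous principal submatrix occupying global indices $a,\dots,b$, and cutting between $p$ and $p+1$, one has $D_{[1,N]}=D_{[1,p]}D_{[p+1,N]}-u_p\ell_p\,D_{[1,p-1]}D_{[p+2,N]}$, where $u_p,\ell_p$ are the $(p,p+1)$ and $(p+1,p)$ entries.

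First I would cut at the first block boundary $p=l$. There $u_l=-1$ and $\ell_l=\phi_2(\tau_l+1)$, so the coefficient $-u_l\ell_l$ equals $\phi_2(\tau_l+1)$; each of the four contiguous sub-determinants is again a continuant of the same periodic type, and identifying its present time modulo $l$ turns it into one of the symbols $\xi_{t,(n-1)l}$, $\xi_{t,(n-1)l-1}$, $\xi_{t-1,l-1}$. The upshot is the two-term recursion
\[
\xi_{t,nl}=\xi_{t,l}\,\xi_{t,(n-1)l}+\phi_2(\tau_l+1)\,\xi_{t-1,l-1}\,\xi_{t,(n-1)l-1}.
\]
Because this feeds on the companion determinant $\xi_{t,(n-1)l-1}$ (whose leading block is one shorter), the induction cannot be run on $\xi_{t,nl}$ alone. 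The plan is therefore a \emph{simultaneous} induction on the pair $(\xi_{t,nl},\xi_{t,nl-1})$: cutting $\mathbf{\Phi}_{t,nl-1}$ at $p=l-1$ in the same way yields
\[
\xi_{t,nl-1}=\xi_{t,l-1}\,\xi_{t,(n-1)l}+\phi_2(\tau_l+1)\,\xi_{t-1,l-2}\,\xi_{t,(n-1)l-1},
\]
so the pair obeys a constant ($n$-independent) $2\times2$ transfer recursion with entries $\xi_{t,l},\ \xi_{t,l-1},\ \phi_2(\tau_l+1)\xi_{t-1,l-1},\ \phi_2(\tau_l+1)\xi_{t-1,l-2}$ and seed $(\xi_{t,l},\xi_{t,l-1})$.

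To close the loop I would introduce, alongside the sum $S_n$ on the right of (B.3), the companion sum $S_n'$ obtained by shortening the last block by one (its factor becoming $\varphi_{2,n-1}\xi_{t-i_{n-1},\,l-1-i_{n-1}}$), and then split each multiple sum on its outermost index $i_{n-1}\in\{0,1\}$. Using the definition (B.2), the $i_{n-1}=0$ part (where $\varphi_{2,n-1}=1$) collapses to $\xi_{t,l}\,S_{n-1}$ for $S_n$ and to $\xi_{t,l-1}\,S_{n-1}$ for $S_n'$, while the $i_{n-1}=1$ part factors out $\phi_2(\tau_l+1)$ together with $S_{n-1}'$ (respectively the shortened companion). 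This reproduces precisely the transfer recursion above, so $S_n=\xi_{t,nl}$ and $S_n'=\xi_{t,nl-1}$ follow by induction; conceptually, (B.3) is just the continuant expansion of the tridiagonal determinant grouped by periods, each boundary either left intact or ``bridged'' by the dimer whose weight $-u_p\ell_p$ is the factor $\varphi_{2,j}$.

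The part that needs care — and the main obstacle — is the periodicity bookkeeping that makes (B.3) come out in precisely the stated shape. Two scalars are genuinely distinct here, $\xi_{t,\cdot}$ and $\xi_{t-1,\cdot}$, and one must track which block owns the size reduction and which the time shift $t\mapsto t-1$: the cut identity naturally charges the shift to the block that loses its last site, whereas (B.3) charges it to the block to its right (reflected in the argument $t-i_{T-1}$ paired with the size $l-i_T-i_{T-1}$). Reconciling the two conventions amounts to checking that, for every configuration $(i_1,\dots,i_{n-1})$, the two products of continuants agree term by term; this is where the $l$-periodicity identification of determinants whose present times differ by a multiple of $l$ (legitimate because each reduced block has length $<l$) is used repeatedly, and where the sign $u_p=-1$ must be carried so that each bridged boundary contributes exactly $\varphi_{2,j}=\phi_2(\tau_{jl}+1)$ and not its negative. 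The remaining work — expanding the $2\times2$ recursion and relabelling the dummy indices — is routine.
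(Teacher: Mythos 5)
Your proof is correct and follows essentially the same route as the paper's: the base case is the $2\times 2$ block expansion of eq.\ (B.1), and the inductive step is the same two-term continuant cut at the block boundary $p=l$, i.e.\ the recursion $\xi _{t,(n+1)l}=\xi _{t,nl}\xi _{t,l}+\phi _{2}(t-nl+1)\xi _{t,nl-1}\xi _{t-1,l-1}$ of eq.\ (B.5), recombined with the induction hypothesis. The only difference is that you carry the companion determinant $\xi _{t,nl-1}$ explicitly as part of a simultaneous induction on the pair $(\xi _{t,nl},\xi _{t,nl-1})$, which makes fully rigorous a step the paper's eq.\ (B.6) uses only implicitly, namely that the sum with last factor shortened to $\xi _{t-i_{n-1},l-1-i_{n-1}}$ equals $\xi _{t,nl-1}$.
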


In the above Proposition $\xi _{t,nl}$ is expressed as the sum of $\dsum
\limits_{j=0}^{n-1}\binom{n-1}{j}=2^{n-1}$\ terms, each of which is the
product of $n$\ terms. In other words, it is decomposed into determinants of 
$(l-m)\times (l-m)$ continuant matrices, $m=0,1,2$: $\Phi
_{t-i_{T-1},l-i_{T}-i_{T-1}}$.

When $n=3$, eq. (B.$3$) reduces to:%
\begin{eqnarray*}
\xi _{t,nl} &=&\xi _{t,l}^{3}+\phi _{2}(\tau _{l}+1)\xi _{t,l-1}\xi
_{t-1,l-1}\xi _{t,l} \\
&&+\xi _{t,l}\phi _{2}(\tau _{l}+1)\xi _{t,l-1}\xi _{t-1,l-1} \\
&&+\phi _{2}^{2}(\tau _{l}+1)\xi _{t,l-1}\xi _{t-1,l-2}\xi _{t-1,l-1} \\
&=&\xi _{t,l}^{3}+2\phi _{2}(\tau _{l}+1)\xi _{t,l-1}\xi _{t-1,l-1}\xi
_{t,l}+\phi _{2}^{2}(\tau _{l}+1)\xi _{t,l-1}\xi _{t-1,l-2}\xi _{t-1,l-1},
\end{eqnarray*}%
that is, $\xi _{t,nl}$ is equal to the sum of four ($p^{n-1}=2^{2}$; $%
i_{1}=i_{2}=0$, $i_{1}=i_{2}=1$, $i_{1}=0$ and $i_{2}=1$, and $i_{1}=1$ and $%
i_{2}=0$) terms, each of which is the product of three ($n=3$) $\xi $'s
(continuant determinants).

Next we will prove Proposition \ref{ProPAR(2;l)KSI} by mathematical
induction. For $n=2$ the result has been proved in eq. (B.$1$). If we assume
that eq. (B.$3$) holds for $n$ then it will be sufficient to prove that it
holds for $n+1$ as well.

\begin{proof}
(\textbf{Proposition} \ref{ProPAR(2;l)KSI}) Assume that 
\begin{equation}
\xi _{t,nl}=\left \vert \Phi _{t,nl}\right \vert =\dsum
\limits_{i_{1}=0}^{1}\cdots \dsum \limits_{i_{n-1}=0}^{1}\{ \xi
_{t,l-i_{1}}(\dprod \limits_{l=2}^{n-1}\varphi _{2,T-1}\xi
_{t-i_{T-1},l-i_{T}-i_{T-1}})\varphi _{2,n-1}\xi _{t-i_{n-1},l-i_{n-1}}\}. 
\tag{B.4}
\end{equation}%
Similarly to eq. (B.$1$) we can express $\xi _{t,(n+1)l}$ as the determinant
of a $2\times 2$ block matrix:%
\begin{eqnarray}
\xi _{t,(n+1)l} &=&\left \vert 
\begin{array}{ll}
\mathbf{\Phi }_{t,l} & \overline{\mathbf{0}} \\ 
\widetilde{\mathbf{0}}_{t} & \mathbf{\Phi }_{t,nl}%
\end{array}%
\right \vert =\left \vert \mathbf{\Phi }_{t,nl}\right \vert \left \vert 
\mathbf{\Phi }_{t,l}\right \vert +\phi _{2}(t-nl+1)\left \vert \mathbf{\Phi }%
_{t,nl-1}\right \vert \left \vert \mathbf{\Phi }_{t-1,l-1}\right \vert 
\notag \\
&=&\xi _{t,nl}\xi _{t,l}+\phi _{2}(t-nl+1)\xi _{t,nl-1}\xi _{t-1,l-1}, 
\TCItag{B.5}
\end{eqnarray}%
where $\widetilde{\mathbf{0}}_{t}$, is an $nl\times l$\ matrix of zeros
except for $\phi _{2}(t-nl+1)$\ in its $1\times l$\ entry and the second
equality follows from eq. (B.$1$). Combining eqs. (B.$4$) and (B.$5$) yields%
\begin{eqnarray}
&&\xi _{t,(n+1)l}%
\begin{tabular}{l}
=%
\end{tabular}%
\dsum \limits_{i_{1}=0}^{1}\cdots \dsum \limits_{i_{n-1}=0}^{1}\{ \xi
_{t,l-i_{1}}(\dprod \limits_{T=2}^{n-1}\varphi _{2,T-1}\xi
_{t-i_{T-1},l-i_{T}-i_{T-1}})\varphi _{2,n-1}\xi _{t-i_{n-1},l-i_{n-1}}\}
\xi _{t,l}+  \notag \\
&&\dsum \limits_{i_{1}=0}^{1}\cdots \dsum \limits_{i_{n-1}=0}^{1}\{ \xi
_{t,l-i_{1}}(\dprod \limits_{T=2}^{n-1}\varphi _{2,T-1}\xi
_{t-i_{T-1},l-i_{T}-i_{T-1}})\varphi _{2,n-1}\xi _{t-i_{n-1},l-1-i_{n-1}}\}
\phi _{2}(t-nl+1)\xi _{t-1,l-1}  \notag \\
&=&\dsum \limits_{i_{1}=0}^{1}\cdots \dsum \limits_{i_{n}=0}^{1}\{ \xi
_{t,l-i_{1}}(\dprod \limits_{T=2}^{n}\varphi _{2,T-1}\xi
_{t-i_{T-1},l-i_{T}-i_{T-1}})\varphi _{2,n}\xi _{t-i_{n},l-i_{n}}\}, 
\TCItag{B.6}
\end{eqnarray}%
which completes the proof.
\end{proof}

\begin{corollary}
\label{CorCAR(2,l,r)KSIX}For the CAR($2;l;d$) process, in eq. (\ref%
{CAR(2;l;d)}), with $0\leq d\leq l-1$, $\xi _{t,l}=\left \vert \mathbf{\Phi }%
_{t,l}\right \vert $ (see eq. (\ref{PHI(l)CAR2})) can be written as%
\begin{equation}
\xi _{t,l}=\dsum \limits_{i_{1}=0}^{1}\cdots \dsum \limits_{i_{d}=0}^{1}\{
\xi _{t,l_{1}-i_{1}}(\dprod \limits_{j=2}^{d}\varphi _{2,j-1}\xi
_{t-l_{j-1}-i_{j-1},l_{j}-l_{j-1}-i_{j}-i_{j-1}})\varphi _{2,d}\xi
_{t-l_{d}-i_{d},l-l_{d}-i_{d}}\},  \tag{B.7}
\end{equation}%
where $\varphi _{2,j}$ is defined similarly to the one in eq. (B.$2$), i.e., 
$\varphi _{2,j}=\phi _{2}(t-(l_{j}-l_{j-1})+1)$ if $i_{j}=1$ (the proof is
along the lines of that of Proposition (\ref{ProPAR(2;l)KSI}) above).
\end{corollary}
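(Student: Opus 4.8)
The plan is to exploit the structural parallel between the block tridiagonal matrix $\mathbf{\Phi}_{t,l}$ of the CAR process in eq.~(\ref{PHI(l)CAR2}) and the block Toeplitz matrix used for the PAR process, and to run the same inductive argument that established Proposition~\ref{ProPAR(2;l)KSI}. The decisive---and essentially only---difference is that the $d+1$ diagonal blocks are now continuants $\mathbf{\Phi}_{t-l_{j-1},\,l_{j}-l_{j-1}}$ of \emph{unequal} sizes $l_{j}-l_{j-1}$, each coupled to its neighbours through a single $-1$ in the corner of the superdiagonal block $\overline{\mathbf{0}}_{j}$ and a single $\phi_{2}(\tau_{l_{j}}+1)$ in the corner of the subdiagonal block $\widetilde{\mathbf{0}}_{j}$. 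I would therefore induct on the number of cycles, i.e.\ on $d$: the base case $d=0$ is the trivial one-block identity $\xi_{t,l}=|\mathbf{\Phi}_{t,l}|$ (a single summand), while $d=1$ reproduces exactly the two-term split of eq.~(B.1).

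For the inductive step I would peel off one extreme diagonal block---say the oldest cycle, the block $\mathbf{\Phi}_{t-l_{d},\,l_{d+1}-l_{d}}$---and rewrite $\mathbf{\Phi}_{t,l}$ as a $2\times2$ block matrix whose upper-left part is the CAR matrix $\mathbf{\Phi}_{t,l_{d}}$ for cycles $1,\dots,d$ (itself of the required form, with $d-1$ couplings) and whose lower-right part is $\mathbf{\Phi}_{t-l_{d},\,l_{d+1}-l_{d}}$. The only coupling between the two parts is the pair of bridge entries $-1$ and $\phi_{2}(\tau_{l_{d}}+1)$. Expanding this block determinant along those two bridge entries---precisely the computation behind eqs.~(B.1) and~(B.5)---yields the two-term recursion
\[
\xi_{t,l}=\xi_{t,l_{d}}\,\xi_{t-l_{d},\,l_{d+1}-l_{d}}
+\phi_{2}(\tau_{l_{d}}+1)\,\xi_{t,l_{d}-1}\,\xi_{t-l_{d}-1,\,l_{d+1}-l_{d}-1},
\]
in which the first summand keeps both adjacent blocks at full size (the $i_{d}=0$ branch) and the second contracts each by one row and column while carrying the factor $\phi_{2}(\tau_{l_{d}}+1)$ (the $i_{d}=1$ branch, which is what the factor $\varphi_{2,d}$ in the statement records). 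Substituting the inductive hypothesis for $\xi_{t,l_{d}}$ and for its once-reduced form $\xi_{t,l_{d}-1}$---each a sum over $(i_{1},\dots,i_{d-1})\in\{0,1\}^{d-1}$---and absorbing the new index $i_{d}$, reassembles the full sum over $\{0,1\}^{d}$ in eq.~(B.7), exactly as eqs.~(B.4)--(B.6) reassembled the PAR sum.

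The main obstacle I anticipate is not the expansion itself but the index bookkeeping. I must verify that when two consecutive couplings both contract (i.e.\ $i_{j-1}=i_{j}=1$) the shared diagonal block loses one row to each neighbour, so that its dimension drops by $i_{j}+i_{j-1}$ and its time argument shifts by $i_{j-1}$, in agreement with the subscript $\xi_{t-l_{j-1}-i_{j-1},\,l_{j}-l_{j-1}-i_{j}-i_{j-1}}$ appearing in eq.~(B.7). Equally delicate is tracking the time argument of each bridge entry: the corner of $\widetilde{\mathbf{0}}_{j}$ carries $\phi_{2}(\tau_{l_{j}}+1)$, and---because there is no periodicity here to collapse these onto a single common value as in the PAR case, where $\phi_{2}(\tau_{jl}+1)=\phi_{2}(\tau_{l}+1)$---one must check that each such factor lands on the correct $\varphi_{2,j}$, with its cycle-length-dependent argument, in eq.~(B.7). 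Both points are settled by following the positions of the bridge entries through each successive block expansion, in direct analogy with the proof of Proposition~\ref{ProPAR(2;l)KSI}.
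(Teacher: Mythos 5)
Your proposal is correct and takes essentially the same route as the paper: the paper offers no separate argument for this corollary beyond the remark that the proof is ``along the lines of'' Proposition~\ref{ProPAR(2;l)KSI}, and your induction on $d$, peeling off one extreme diagonal block and applying the two-term block-determinant expansion of eqs.~(B.1) and (B.5) with the bookkeeping adjusted for unequal block sizes, is precisely that intended adaptation. Your identification of the bridge entry as $\phi_{2}(\tau_{l_{j}}+1)$ is the one consistent with the definition of $\widetilde{\mathbf{0}}_{j}$ in eq.~(\ref{PHI(l)CAR2}).
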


\begin{proposition}
\label{ProABAR(2,n)KSI}For the ABAR($2;r$) process in eq. (\ref{ABAR(2)})
with $r$, $0\leq r\leq k-1$, abrupt breaks at times $t-k_{1}$, $t-k_{2}$, $%
\ldots $, $t-k_{r}$, $\xi _{t,k}$ in eq. (\ref{KSIABAR2}) can be written as%
\begin{equation}
\xi _{t,k}=\dsum \limits_{i_{1}=0}^{1}\cdots \dsum \limits_{i_{r}=0}^{1}\{
\xi _{t,k_{1}-i_{1}}(\dprod \limits_{j=2}^{r}\varphi _{2,j-1}\xi
_{t-k_{j-1}-i_{j-1},k_{j}-k_{j-1}-i_{j}-i_{j-1}})\varphi _{2,r}\xi
_{t-k_{r}-i_{r},k-k_{r}-i_{r}}\}.  \tag{B.8}
\end{equation}%
where $\varphi _{2,j}$ is defined similarly to the one in eq. (B.$2$) (the
proof is similar to that of Proposition (\ref{ProPAR(2;l)KSI}) above).
\end{proposition}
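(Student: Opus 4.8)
The plan is to prove eq. (B.8) by induction on the number of breaks $r$, following exactly the route of Proposition \ref{ProPAR(2;l)KSI}. The first observation is structural: the matrix whose determinant is $\xi_{t,k}$ in eq. (\ref{KSIABAR2}) is block tridiagonal with $r+1$ diagonal blocks $\mathbf{\Phi}_{t-k_{j-1},k_{j}-k_{j-1}}$ of the (now unequal) sizes $k_{j}-k_{j-1}$, and that each coupling block is rank one: the superdiagonal block $\overline{\mathbf{0}}_{j}$ carries a single $-1$ in one corner and the subdiagonal block $\widetilde{\mathbf{0}}_{j}$ a single $\phi_{2,j}$ in the opposite corner. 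This sparsity is what collapses the full block-Laplace expansion into a two-term recursion, and it is the only place where the varying block sizes enter.

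The key lemma, which also supplies the base case $r=1$, is the ABAR analogue of eqs. (B.$1$) and (B.$5$): peeling off the oldest (top-left) diagonal block $\mathbf{\Phi}_{t-k_{r},k-k_{r}}$ and expanding the determinant along the single nonzero entries of $\overline{\mathbf{0}}_{r}$ and $\widetilde{\mathbf{0}}_{r}$ yields
\begin{equation*}
\xi_{t,k}=\xi_{t,k_{r}}\,\xi_{t-k_{r},k-k_{r}}+\phi_{2,r}\,\xi_{t,k_{r}-1}\,\xi_{t-k_{r}-1,k-k_{r}-1},
\end{equation*}
where $\xi_{t,k_{r}}$ is the determinant of the remaining bottom-right $k_{r}\times k_{r}$ block tridiagonal matrix (the regimes $1,\dots,r$), $\xi_{t,k_{r}-1}$ its first-row-and-column cofactor, and $\xi_{t-k_{r}-1,k-k_{r}-1}$ the last-row-and-column cofactor of $\mathbf{\Phi}_{t-k_{r},k-k_{r}}$. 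For $r=1$ the remaining matrix is simply $\mathbf{\Phi}_{t,k_{1}}$, so this recursion is already eq. (B.8): its two summands are the $i_{1}=0$ and $i_{1}=1$ contributions, with $\varphi_{2,1}$ equal to $1$ and to $\phi_{2,1}$ respectively, as in eq. (B.$2$). In the periodic setting the identifications $\xi_{t-k_{1},k-k_{1}}=\xi_{t,l}$ and $\xi_{t-k_{1}-1,k-k_{1}-1}=\xi_{t-1,l-1}$ recover eq. (B.$1$); here the two tridiagonal factors simply stay distinct.

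For the induction step I would assume eq. (B.8) for $r-1$ breaks, substitute the induction hypothesis for $\xi_{t,k_{r}}$ into the first term of the recursion and for $\xi_{t,k_{r}-1}$ into the second, and then merge the two resulting multiple sums by absorbing the choice of term into a fresh binary index $i_{r}\in\{0,1\}$. The $i_{r}=0$ branch reproduces the $r-1$-break formula with the terminal factor $\varphi_{2,r}\xi_{t-k_{r},k-k_{r}}$ appended (with $\varphi_{2,r}=1$), while the $i_{r}=1$ branch appends $\varphi_{2,r}\xi_{t-k_{r}-1,k-k_{r}-1}$ (with $\varphi_{2,r}=\phi_{2,r}$) and reduces the previous terminal factor by one in size. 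These two branches combine into the single sum over $i_{1},\dots,i_{r}$ with terminal factor $\varphi_{2,r}\xi_{t-k_{r}-i_{r},k-k_{r}-i_{r}}$ and penultimate factor $\varphi_{2,r-1}\xi_{t-k_{r-1}-i_{r-1},\,k_{r}-k_{r-1}-i_{r}-i_{r-1}}$, which is precisely eq. (B.8). This merging is the exact analogue of the passage from eqs. (B.$4$)--(B.$5$) to (B.$6$).

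The step I expect to be the main obstacle is the bookkeeping of block sizes and time arguments in the $i_{r}=1$ branch, specifically the claim that applying the induction hypothesis to the cofactor $\xi_{t,k_{r}-1}$ produces the $r-1$-break expansion with \emph{only its terminal factor} shrunk by one unit, from $\xi_{t-k_{r-1}-i_{r-1},k_{r}-k_{r-1}-i_{r-1}}$ to $\xi_{t-k_{r-1}-i_{r-1},k_{r}-1-k_{r-1}-i_{r-1}}$. This stability-under-shrinking is what makes the ``$-i_{j}-i_{j-1}$'' reductions and the terminal shift ``$-i_{r}$'' of eq. (B.8) align, and it is exactly the move hidden in eq. (B.$6$) where $\xi_{t-i_{n-1},l-i_{n-1}}$ is replaced by $\xi_{t-i_{n-1},l-1-i_{n-1}}$. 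Checking that the single nonzero entries of $\widetilde{\mathbf{0}}_{r}$ and $\overline{\mathbf{0}}_{r}$ sit in the corners that make this reduction act on the correct adjacent block, and that the sign in the $2\times2$ block cofactor expansion is benign (the $-1$ of $\overline{\mathbf{0}}_{r}$ cancelling against the cofactor sign so that the coupling enters with the bare coefficient $\phi_{2,r}$), is the delicate part; once this alignment is verified the induction closes verbatim as in the periodic case.
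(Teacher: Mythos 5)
Your proof is correct and follows exactly the route the paper intends: the paper gives no separate argument for Proposition \ref{ProABAR(2,n)KSI} beyond the remark that it mirrors Proposition \ref{ProPAR(2;l)KSI}, and your induction on $r$ --- peeling off the oldest diagonal block via the two-term cofactor expansion (the analogue of eqs. (B.$1$) and (B.$5$)), applying the hypothesis to both $\xi_{t,k_r}$ and its shrunken cofactor $\xi_{t,k_r-1}$, and merging the two branches into the binary index $i_r$ as in eq. (B.$6$) --- is precisely that argument carried out for unequal block sizes. The bookkeeping you flag as delicate (the terminal factor of the $r-1$ expansion shrinking by one in the $i_r=1$ branch, and the sign of the $-1$ coupling cancelling) is handled correctly.
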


\section{APPENDIX\label{SubSecVSRepr}}

\emph{Vector Seasons Representation}

For the benefit of the reader this Appendix reviews some results on PARMA
models. Recall that the drift and the autoregressive coefficients are
periodically varying: $\phi _{m}(t)=\phi _{m}(\tau _{n})$, $m=0,1,2$, $\tau
_{n}=\tau -nl$. Recall also that $t_{s}$ denotes time at the $s$th season: $%
t_{s}=Tl+s$, $s=1,\ldots ,l$, and that we can write $\phi _{m}(Tl+s)=\phi
_{m,s}$ (see eq. (\ref{PAR(2)})).

We assume without loss of generality that time $t$ is at the $l$th season
(e.g., $t=t_{l}=(T+1)l$). Thus our $\mathbf{\Phi }_{\mathfrak{t},l}$ matrix
in eq. (\ref{FAIM1}) becomes:

\begin{equation*}
\mathbf{\Phi }_{t,l}=\mathbf{\Phi }(l)=\left( 
\begin{array}{ccccc}
\phi _{1,1} & -1 &  &  &  \\ 
\phi _{2,2} & \phi _{1,2} & -1 &  &  \\ 
& \ddots & \ddots & \ddots &  \\ 
&  & \phi _{2,l-1} & \phi _{1,l-1} & -1 \\ 
&  &  & \phi _{2,l} & \phi _{1,l}%
\end{array}%
\right) .
\end{equation*}%
A convenient representation of the PAR($2;l$) model (\ref{PAR(2)}) is the
VAR($1$) representation- hereafter we will refer to it as the vector of
seasons (VS) representation (see, for example, Tiao and Guttman, 1980;
Osborn, 1991; \ Franses, 1994, 1996a,b; del Barrio Castro and Osborn, 2008).

The corresponding VS representation of the PAR($2;l$) model (ignoring the
drifts) is given by%
\begin{equation}
\mathbf{\Phi }_{0}\mathbf{y}_{T}=\mathbf{\Phi }_{1}\mathbf{y}_{T-1}+\mathbf{%
\varepsilon }_{T},  \tag{C.1}
\end{equation}%
with $\mathbf{y}_{T}=(y_{1T},\ldots ,y_{lT})^{\prime }$, $\mathbf{%
\varepsilon }_{T}=(\varepsilon _{1T},\ldots ,\varepsilon _{lT})^{\prime }$,
where the first subscript refers to the season ($s$) and the second one to
the `period' ($T$). Moreover, $\mathbf{\Phi }_{0}$ is an $l\times l$
parameter matrix whose ($i,j$) entry is:%
\begin{equation*}
\left \{ 
\begin{array}{lll}
1 & \text{if} & i=j, \\ 
0 & \text{if} & j>i, \\ 
-\phi _{i-j,i} & if & j<i,%
\end{array}%
\right.
\end{equation*}%
and $\mathbf{\Phi }_{1}$ is an $l\times l$ parameter matrix with ($i,j$)
elements $\phi _{i+l-j,i}$, (see, for example, Lund and Basawa, 2000, and
Franses and Paap, 2005).

As pointed out by Franses (1994), the idea of stacking was introduced by
Gladyshev (1961) and is also considered in e.g., Pagano (1978). Tiao and
Guttman (1980), Osborn (1991) and Franses (1994) used it in the AR setting.
The dynamic system in eq. (C.$1$) can be written in a compact form%
\begin{equation*}
\mathbf{\Phi }(B)\mathbf{y}_{T}\mathbf{=\varepsilon }_{T}\text{ or }\left
\vert \mathbf{\Phi }(B)\right \vert \mathbf{y}_{T}\mathbf{=}adj[\mathbf{\Phi 
}(B)]\mathbf{\varepsilon }_{T}
\end{equation*}%
where $\mathbf{\Phi }(B)=\mathbf{\Phi }_{0}-\mathbf{\Phi }_{1}(B)$.
Stationarity of $\mathbf{y}_{T}$ requires the roots of $\left \vert \mathbf{%
\Phi }(z^{-1})\right \vert =0$ to lie strictly inside the unit circle (see,
among others, Tiao and Guttman, 1980, Osborn, 1991; Franses, 1994, 1996a;
Franses and Paap, 2005; del Barrio Castro and Osborn, 2008).

As an example, consider the PAR($2;4$) model 
\begin{equation*}
y_{t_{s}}=\phi _{1,s}y_{t_{s}-1}+\phi _{2,s}y_{t_{s}-2}+\varepsilon _{t_{s}},
\end{equation*}%
which can be written as%
\begin{equation*}
\mathbf{\Phi }_{0}\mathbf{y}_{T}=\mathbf{\Phi }_{1}\mathbf{y}_{T-1}+\mathbf{%
\varepsilon }_{T},
\end{equation*}%
for which the characteristic equation is%
\begin{equation*}
\left \vert \mathbf{\Phi }_{0}-\mathbf{\Phi }_{1}z\right \vert =\left \vert 
\begin{array}{llll}
1 & 0 & -\phi _{2,1}z & -\phi _{1,1}z \\ 
-\phi _{1,2} & 1 & 0 & -\phi _{2,2}z \\ 
-\phi _{2,3} & -\phi _{1,3} & 1 & 0 \\ 
0 & -\phi _{2,4} & -\phi _{1,4} & 1%
\end{array}%
\right \vert =0.
\end{equation*}%
Hence, when the nonlinear parameter restriction 
\begin{eqnarray*}
&&\left \vert \phi _{2,2}\phi _{1,3}\phi _{1,4}+\phi _{2,2}\phi _{2,4}+\phi
_{2,1}\phi _{1,2}\phi _{1,3}+\phi _{2,1}\phi _{2,3}+\phi _{1,1}\phi
_{1,2}\phi _{1,3}\phi _{1,4}\right. \\
&&\left. +\phi _{1,1}\phi _{1,2}\phi _{2,4}+\phi _{1,1}\phi _{1,4}\phi
_{2,3}-\phi _{2,1}\phi _{2,2}\phi _{2,3}\phi _{2,4}\right \vert <1,
\end{eqnarray*}%
is imposed on the parameters, the VS representation of the PAR($2;4$) model
is stationary (see Franses and Paap, 2005). When $\phi _{2,s}=0$ for all $s$%
, that is we have the PAR($1;4$) model, then the stationarity condition
reduces to: $\left \vert \phi _{1,1}\phi _{1,2}\phi _{1,3}\phi _{1,4}\right
\vert <1$, which is equivalent to our condition $\left \vert \xi
_{t,l}\right \vert <1$, or in other words, that the absolute value of $\left
\vert \mathbf{\Phi }(l)\right \vert $ is less than one.

\newpage

\end{document}